\newtheorem{mytheorem}{Theorem}[section]
\newtheorem{myprop}{Proposition}[section]
\newcommand{\denop}{\mathcal{D}}
\newcommand{\hs}{\mathcal{H}}
\newcommand{\linop}{\mathcal{L}}
\DeclareMathOperator{\mytr}{Tr}
\newcommand{\sumkl}{\sum_{k,l = 0}^{d-1}}
\newcommand{\sumj}{\sum_{j = 0}^{d-1}}
\newcommand{\Pkl}{P_{k,l}}
\newcommand{\Okl}{\Omega_{k,l}}
\newcommand{\ckl}{c_{k,l}}
\newcommand{\Wkl}{W_{k,l}}
\newcommand{\id}{\mathbbm{1}}
\newcommand{\R}{\mathbbm{R}}
\newcommand{\C}{\mathbbm{C}}
\newcommand{\ZZ}{\mathbbm{Z}}
\newcommand{\PPT}{\mathrm{PPT}}
\newcommand{\NPT}{\mathrm{NPT}}
\newcommand{\SEP}{\mathrm{SEP}}
\newcommand{\ENT}{\mathrm{ENT}}
\begin{document}

\title{\textbf{Group-Theoretic Perspective on the PPT and Realignment Criteria in the Magic Simplex for Bipartite Qutrits} 
}%

\author{Tobias C. Sutter}
 \email{Contact author: tobias.christoph.sutter@univie.ac.at}
\author{Christopher Popp}%
 \email{Contact author: christopher.popp@univie.ac.at}
\author{Beatrix C. Hiesmayr}
 \email{Contact author: beatrix.hiesmayr@univie.ac.at}
\affiliation{%
 University of Vienna, Faculty of Physics, Währingerstrasse 17, 1090 Vienna.
}%

\date{\today}

\begin{abstract}
Entanglement is a key feature in many quantum technologies, including secure communication protocols and quantum computing.
However, detecting it in mixed quantum states remains a challenging task.
While the positive partial transposition (PPT) and computable cross-norm/realignment criteria are well-established tools for entanglement detection in general, and are especially effective in Bell-diagonal states, their connection to the underlying group structure of this state family has not been fully explored.
In this work, we analyze the PPT and realignment criteria for Bell-diagonal states from a group-theoretic point of view.
Our results demonstrate that the group structure of Bell-diagonal states provides a clear framework for analyzing and computing these two entanglement detection criteria, thereby highlighting the connection between entanglement and group structure.
This unified perspective offers new insights into the mathematical and physical properties of entanglement in structured quantum systems and ties the PPT and realignment criteria for Bell-diagonal states to experimental procedures.
\end{abstract}

\keywords{Entanglement, PPT Criterion, Realignment Criterion, Bell-Diagonal States, Group Theory}

\maketitle

\section{Introduction}

Many quantum technologies, including quantum computing \cite{preskill_quantum_2018}, cryptography \cite{zapatero_advances_2023}, and teleportation \cite{bennett_teleporting_1993}, rely on entanglement as a fundamental resource.
As these technologies advance, a deeper understanding of entanglement becomes increasingly essential for creating and manipulating resource states.
Still, one of the major open problems in quantum information theory is determining whether a given bipartite quantum state is separable or entangled \cite{horodecki_quantum_2009}.
For mixed quantum states, which arise naturally in experiments due to decoherence, this so-called \emph{separability problem} is provably NP hard when considering systems of arbitrary dimensions \cite{gurvits_classical_2003, gharibian_strong_2010}.
Nevertheless, efficient solutions may exist for fixed local dimensions or particular state families.

In the special case of pure states, and hence especially for bipartite pure states of dimension $d_A\times d_B$, the separability problem has a conclusive and efficiently computable answer \cite{guhne_entanglement_2009, bennett_concentrating_1996}.
Furthermore, it is also solved for general qubit-qubit ($d_A=d_B=2$) and qubit-qutrit ($d_A=d_B-1=2$) systems, because the positive partial transposition (PPT) criterion \cite{peres_separability_1996, horodecki_separability_1996, woronowicz_positive_1976} is necessary and sufficient for entanglement in these dimensions.
However, for higher dimensions ($d_Ad_B>6$), the PPT criterion ceases to be necessary for entanglement, leading to the notion of PPT-entangled states \cite{horodecki_separability_1997}.
Physically, such states are relevant as they cannot be distilled to maximally entangled form using local operations and classical communication \cite{horodecki_mixed-state_1998}.
More generally, there exist multiple other efficiently computable criteria for entanglement detection that are necessary \emph{or} sufficient (cf. \cite{horodecki_quantum_2009, guhne_entanglement_2009, chruscinski_entanglement_2014, hiesmayr_bipartite_2025}).
They aim to approximate a solution to the separability problem in higher dimensions, albeit in a one-sided manner.
One prominent example is the computable cross-norm/realignment criterion \cite{chen_matrix_2003, rudolph_further_2005, chen_test_2004}.

Besides the possibility of fixing the dimensions of the subsystems, a different way to obtain further results is to consider specific state families that bear desirable properties while still being experimentally relevant.
One such family of bipartite quantum states is Bell-diagonal states \cite{ranade_asymptotic_2006, baumgartner_special_2007}.
The physical motivation for investigating these states with respect to their entanglement is that they naturally arise when considering (generalized) Pauli errors.
Thus, they are intimately connected to applications like error correction and entanglement distillation \cite{bennett_mixed-state_1996} (see also Ref.~\cite{popp_novel_2025} for a rigorous treatment of the usefulness of Bell-diagonal states in stabilizer-based entanglement distillation protocols).

Regarding their entanglement structure, Ref.~\cite{popp_almost_2022, popp_special_2024} found that the set of Bell-diagonal qutrit-qutrit states ($d_A = d_B = 3$) contains a significant share of PPT-entanglement, namely about 6\%.
This means that the PPT criterion is relatively ineffective for detecting entanglement in this subclass of states, being successful only for roughly 60\% of all states (see Fig.~\ref{fig:shares_ppt_and_realignment}).
In contrast, the realignment criterion appears to be particularly effective in this dimension, detecting around 62\% of all states as entangled.
Importantly, this involves about 10\% of the states where the PPT criterion fails.
Further numerical results in Ref.~\cite{popp_comparing_2023} suggest that the group structure of Bell-diagonal states crucially affects the entanglement class.
Leveraging this connection of entanglement and group structure of Bell-diagonal states, the recently developed stabilizer distillation protocol FIMAX \cite{popp_novel_2025} achieves superior efficiency compared to other established recurrence protocols, especially for low-fidelity states \cite{popp_low-fidelity_2025}.
This leads to the central question of how the group properties of Bell-diagonal states are related to the performance of the PPT and realignment criteria.
Our contribution investigates this relationship analytically.

\begin{figure}
    \includegraphics[width=0.9\linewidth]{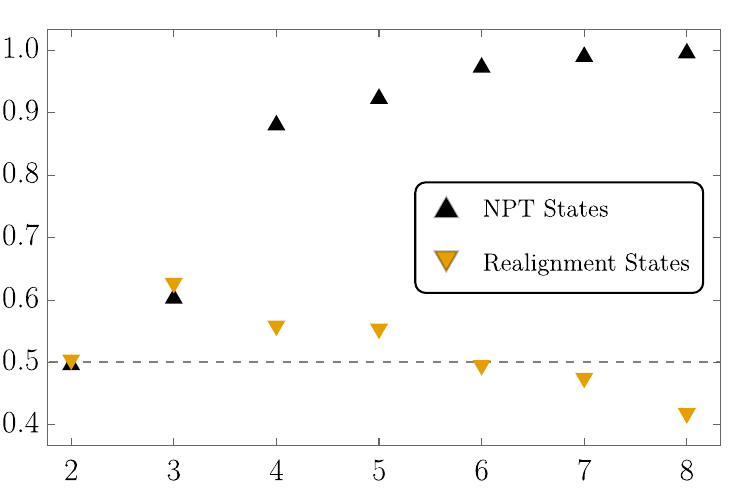}
    \caption{
    Percentages of Bell-diagonal states detected by the PPT and realignment criterion for $d=2,\dots, 8$.
    For each $d$, a total of $10^5$ uniformly sampled states (following a $\mathrm{Dirichlet}(1,\ldots,1)$ distribution on the simplex \eqref{eq:magic_simplex} below) were generated using the Julia package \cite{popp_belldiagonalqudits_2023}.
    For qubits ($d=2$), the PPT criterion is equivalent to the realignment criterion, both detecting around 50\% of states as entangled, which are all existing entangled states in this state family.
    For $d=3$, the realignment criterion is more effective (62.3\%) in detecting entanglement than the PPT criterion (60.7\%).
    In higher dimensions, the vast majority of states are NPT entangled.
    In contrast, the number of entangled states detected by the realignment criterion decreases with increasing dimension $d>3$.
    }
    \label{fig:shares_ppt_and_realignment}
\end{figure}

The work is organized as follows.
Sec.~\ref{sec:PPT_and_realignment} formalizes the PPT and realignment criteria.
In Sec.~\ref{sec:BDS}, we introduce Bell-diagonal states with a focus on their group properties inherited by the Weyl matrices.
We furthermore compute their Bloch vectors and find that they carry information about subgroups of the Weyl-Heisenberg group.
Based on this, we derive an explicit version of the realignment criterion in Sec.~\ref{sec:Realignment_for_BDS}, unveiling its dependence on the subgroups for Bell-diagonal qutrits in Sec.~\ref{sec:Realignment_qutrit_BDS}.
Sec.~\ref{sec:PPT_for_qutrit_BDS} shows that the PPT criterion for Bell-diagonal qutrits can also be understood in terms of the subgroups of the Weyl-Heisenberg group.
We furthermore derive observables that detect NPT states and prove a statement that can guide the search for PPT-entangled Bell-diagonal qutrits.
Finally, Sec.~\ref{sec:conclusion} summarizes the results and gives an outlook on future research directions.

\section{PPT and Realignment Criteria} \label{sec:PPT_and_realignment}

Let $\linop(\hs)$ be the set of linear operators mapping the Hilbert space $\hs$ into itself, and $\denop(\hs) := \{\rho \in \linop(\hs) \, | \, \rho\geq0, \mytr(\rho) =1\}$ be the set of density matrices on $\hs$.
For bipartite systems, we set $\hs=\hs_A\otimes\hs_B$.
A state $\rho\in\denop(\hs)$ is called separable if it can be written as $\rho=\sum_{i} p_i \, \rho_i^A \otimes \rho_i^B$ for some set $\{p_i , \rho_i^A, \rho_i^B\}$, where $p_i \geq 0$ and $\rho_i^{A/B} \in \denop(\hs_{A/B})$ for all $i$, and $\sum_i p_i =1$; otherwise, $\rho$ is called entangled.
Using these notions, $\denop(\hs)$ can be partitioned into the sets SEP and ENT that denote separable and entangled states, respectively.
Hence, $\SEP \cup \ENT = \denop(\hs_A\otimes\hs_B)$.
Any state $\rho\in\denop(\hs_A\otimes\hs_B)$ can be written as
\begin{align} \label{eq:rho_general}
    \rho = \sum_{i,j=0}^{d_A-1} \sum_{k,l=0}^{d_B-1} \rho_{ik,jl} |i\rangle\langle j|\otimes |k\rangle\langle l| \;,
\end{align}
where $\dim(\hs_A) = d_A$ and $\dim(\hs_B) = d_B$, and $\{|i\rangle\langle j|\}_{i,j=0}^{d_A-1}$ and $\{|k\rangle\langle l|\}_{k,l=0}^{d_B-1}$ constitute a basis of $\denop(\hs_A)$ and $\denop(\hs_B)$, respectively.
Throughout, the transposition map $T:\linop(\hs)\rightarrow\linop(\hs)$ is defined with respect to the computational basis $\{|i\rangle\}_{i=0}^{d-1}$ of $\hs$.

We define the partial transpose acting on the second subsystem, denoted by a superscript $\Gamma$, by its action on the basis states,
\begin{align}
    (|i\rangle\langle j|\otimes |k\rangle\langle l|)^\Gamma &:= (\id_A \otimes T_B)(|i\rangle\langle j|\otimes |k\rangle\langle l|) \notag\\
    &=|i\rangle\langle j|\otimes |l\rangle\langle k| \;, \label{eq:partial_transpose}
\end{align}
where $\id_A$ is the identity map on $\linop(\hs_A)$ and $T_B$ is the transposition map on $\linop(\hs_B)$.
The action of $\Gamma$ extends to general states of the form \eqref{eq:rho_general} by linearity.

The positive partial transposition criterion (also known as PPT or Peres-Horodecki criterion) is a powerful sufficient criterion for entanglement:
\begin{mytheorem}[PPT Criterion \cite{horodecki_separability_1996, peres_separability_1996}] \label{thm:PPT_crit}
    Let $\rho\in\denop(\hs_A\otimes\hs_B)$, and $\rho^\Gamma$ be the partial transpose of $\rho$.
    Then
    \begin{align} \label{eq:PPT_crit}
        \rho^\Gamma \ngeq 0 \Longrightarrow \rho\in\ENT \;.
    \end{align}
\end{mytheorem}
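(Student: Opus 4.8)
The natural approach is to prove the contrapositive: every separable state has a positive semidefinite partial transpose. So I would start by assuming $\rho \in \SEP$ and invoke the definition of separability to write $\rho = \sum_k p_k\, \sigma_k^A \otimes \sigma_k^B$, a convex combination ($p_k \geq 0$, $\sum_k p_k = 1$) of product states with $\sigma_k^A \in \denop(\hs_A)$ and $\sigma_k^B \in \denop(\hs_B)$. Applying $\Gamma = \id_A \otimes T_B$ and using the linearity of the partial transpose noted after Eq.~\eqref{eq:partial_transpose}, this becomes $\rho^\Gamma = \sum_k p_k\, \sigma_k^A \otimes (\sigma_k^B)^T$.

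The crux is then the elementary observation that transposition maps $\denop(\hs_B)$ into itself: it preserves the trace, and for Hermitian $\sigma \geq 0$ one has $\langle \psi | \sigma^T | \psi \rangle = \langle \bar\psi | \sigma | \bar\psi \rangle \geq 0$ for all $|\psi\rangle$ (writing $|\bar\psi\rangle$ for the entrywise complex conjugate), so $\sigma^T \geq 0$. Hence each $(\sigma_k^B)^T$ is again a density matrix, each summand $\sigma_k^A \otimes (\sigma_k^B)^T$ is a tensor product of positive semidefinite operators and therefore positive semidefinite, and a nonnegative combination of positive semidefinite operators is positive semidefinite. Thus $\rho^\Gamma \geq 0$, and contraposition together with $\SEP \cup \ENT = \denop(\hs_A \otimes \hs_B)$ gives $\rho^\Gamma \ngeq 0 \Rightarrow \rho \in \ENT$.

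There is no real obstacle in this argument — it is a short, classical computation. The only conceptual point worth flagging is why the implication is one-directional: transposition is positive but not completely positive, so the step ``$\sigma \geq 0 \Rightarrow \sigma^T \geq 0$'' is legitimate only because it is applied to the local factor $\sigma_k^B$ of a product term, not to a globally entangled operator; applying $\id_A \otimes T_B$ to an entangled $\rho$ can indeed produce negative eigenvalues, which is precisely the phenomenon the criterion exploits.
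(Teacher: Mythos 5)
Your argument is correct and is precisely the classical proof of Peres and the Horodeckis that the paper cites for Thm.~\ref{thm:PPT_crit} without reproducing it: contraposition via the separable decomposition, positivity of the transpose on each local factor, and closure of positive semidefiniteness under tensor products and convex combinations. Your closing remark on why the implication is only one-directional (transposition being positive but not completely positive) matches the paper's own observation (ii) in Sec.~\ref{sec:PPT_and_realignment}, so nothing is missing.
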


If $\rho^\Gamma \ngeq 0$, i.e., it is not positive semidefinite, we say it is NPT entangled and write $\rho\in\NPT$, and otherwise $\rho\in\PPT$.
This admits another partition of the space of bipartite density operators, namely $\PPT \cup \NPT = \denop(\hs_A \otimes \hs_B)$.
By Thm.~\ref{thm:PPT_crit} it holds that $\rho\in\NPT\Longrightarrow\rho\in\ENT$, but the converse is only true for $d_A d_B \leq 6$ \cite{horodecki_separability_1996, horodecki_separability_1997}.
Hence, $\SEP \neq \PPT$ for higher-dimensional states, which crucially includes qutrit-qutrit systems with $d_A = d_B = 3$.


The realignment criterion (also known as cross-norm criterion) is another sufficient criterion for entanglement.
It is based on the realignment (or shuffle) operation on $\linop(\hs_A\otimes\hs_B)$, which is defined by its action on the basis states as
\begin{align} \label{eq:realignment_operator}
    \hat{R}(|i\rangle\langle j|\otimes |k\rangle\langle l|)
    := |i\rangle\langle k|\otimes |j\rangle\langle l| \;,
\end{align}
and extends to general $\rho\in\denop(\hs_A\otimes\hs_B)$ by linearity.
The computable cross-norm/realignment criterion (hereafter only referred to as realignment criterion) reads:
\begin{mytheorem}[Realignment Criterion \cite{chen_matrix_2003, rudolph_further_2005}] \label{thm:realignment}
    Let $\rho\in\denop(\hs_A\otimes \hs_B)$ and define the realigned state $\rho_R := \hat{R}(\rho)$.
    Then
    \begin{align} \label{eq:realignment_crit}
        \mytr\left( \sqrt{\rho_R^\dagger \rho_R} \right) >1 \Longrightarrow \rho\in\ENT \;.
    \end{align}
\end{mytheorem}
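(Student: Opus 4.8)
The plan is to prove the realignment criterion, which states that $\mytr\!\left(\sqrt{\rho_R^\dagger \rho_R}\right) > 1$ implies $\rho \in \ENT$. Equivalently, I would prove the contrapositive: if $\rho \in \SEP$, then $\lVert \rho_R \rVert_1 \leq 1$, where $\lVert \cdot \rVert_1$ denotes the trace norm, since $\mytr(\sqrt{\rho_R^\dagger \rho_R}) = \lVert \rho_R \rVert_1$ is precisely the sum of singular values of $\rho_R$. The key observation is that the realignment map $\hat R$, as defined in \eqref{eq:realignment_operator}, sends $|i\rangle\langle j| \otimes |k\rangle\langle l|$ to $|i\rangle\langle k| \otimes |j\rangle\langle l|$, which in vectorized language means that $\hat R(\rho)$ reshuffles the entries of $\rho$ so that its singular values coincide with those of $\rho$ viewed as a bipartite operator under a different Hilbert-space splitting. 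Concretely, if one vectorizes density operators via $|i\rangle\langle j| \mapsto |i\rangle \otimes |j\rangle$, then $\vect(\rho) = \vect(\rho_R)$ up to the reshuffling of indices, and the trace norm of $\rho_R$ equals the trace norm of the operator $\tilde\rho$ on $\hs_A \otimes \hs_A^{\ast}$ (or suitable tensor factors) whose matrix elements are $\langle i k | \tilde\rho | j l\rangle = \rho_{ik,jl}$ reorganized appropriately.

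First I would establish the behavior of $\hat R$ on a product state. For $\rho = \rho_A \otimes \rho_B$ with $\rho_A = \sum_{ij} a_{ij}|i\rangle\langle j|$ and $\rho_B = \sum_{kl} b_{kl}|k\rangle\langle l|$, a direct computation using \eqref{eq:realignment_operator} gives $\hat R(\rho_A \otimes \rho_B) = \sum_{ijkl} a_{ij} b_{kl} |i\rangle\langle k| \otimes |j\rangle\langle l| = \vect(\rho_A)\,\vect(\rho_B)^{\ast T}$, i.e.\ a rank-one matrix (after identifying $\hs_A \otimes \hs_B$ with $\hs_A \otimes \hs_A$ in the bra slot, $\hs_B \otimes \hs_B$ in the ket slot — the precise index bookkeeping is routine but must be done carefully). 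Hence $\lVert \hat R(\rho_A \otimes \rho_B) \rVert_1 = \lVert \vect(\rho_A) \rVert_2 \cdot \lVert \vect(\rho_B) \rVert_2 = \sqrt{\mytr(\rho_A^\dagger\rho_A)}\sqrt{\mytr(\rho_B^\dagger\rho_B)}$. Since $\rho_A, \rho_B$ are density matrices, $\mytr(\rho_A^2) \leq 1$ and $\mytr(\rho_B^2) \leq 1$ (purity is bounded by $1$), so $\lVert \hat R(\rho_A \otimes \rho_B) \rVert_1 \leq 1$.

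Next I would extend to general separable states. Write $\rho = \sum_m p_m\, \sigma_m^A \otimes \sigma_m^B$ with $p_m > 0$, $\sum_m p_m = 1$, and $\sigma_m^A, \sigma_m^B$ pure (WLOG) density matrices. By linearity of $\hat R$ and the triangle inequality for the trace norm,
\begin{align}
    \lVert \rho_R \rVert_1 &= \left\lVert \sum_m p_m\, \hat R(\sigma_m^A \otimes \sigma_m^B) \right\rVert_1 \notag\\
    &\leq \sum_m p_m \left\lVert \hat R(\sigma_m^A \otimes \sigma_m^B) \right\rVert_1 \leq \sum_m p_m = 1,
\end{align}
where the last inequality uses the product-state bound together with $\mytr((\sigma_m^A)^2)\leq 1$ and $\mytr((\sigma_m^B)^2)\leq 1$ (these are in fact equalities for pure states, but we only need the bound). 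Taking the contrapositive, $\lVert \rho_R \rVert_1 > 1$ forces $\rho \notin \SEP$, i.e.\ $\rho \in \ENT$, which is exactly \eqref{eq:realignment_crit}.

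The main obstacle is purely notational rather than conceptual: one must set up the identification between the realigned matrix $\rho_R$ acting on $\linop(\hs_A \otimes \hs_B)$ and a genuine linear operator whose singular values are unambiguously defined, i.e.\ carefully track which pairs of indices play the role of ``row'' and ``column.'' Getting the Hilbert–Schmidt-to-trace-norm correspondence $\lVert \vect(X)\vect(Y)^{\ast T} \rVert_1 = \lVert X\rVert_2 \lVert Y \rVert_2$ right, and confirming that $\hat R$ of a product state is indeed the outer product $\vect(\rho_A)\vect(\rho_B)^{\ast T}$ under this identification, is where all the care is needed; once that is in place the rest is the triangle inequality and the elementary bound $\mytr(\sigma^2)\leq 1$ for density operators. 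An alternative, essentially equivalent route is to invoke the Cauchy–Schwarz-type bound directly: $\sum_i s_i(\rho_R) \leq 1$ for separable $\rho$ follows from $\rho_R$ being a convex combination of unit-Hilbert-Schmidt-norm rank-one operators, which also yields the claim without explicitly naming the reshuffled operator.
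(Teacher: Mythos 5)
Your proof is correct. The paper does not actually prove Theorem~\ref{thm:realignment} — it is stated as a known result and attributed to the cited references — so there is no in-paper argument to compare against; your argument is precisely the standard one from those references: realignment sends a product state to the rank-one outer product $\vect(\rho_A)\vect(\rho_B)^{T}$, whose trace norm $\sqrt{\mytr(\rho_A^2)}\sqrt{\mytr(\rho_B^2)}\leq 1$ is controlled by the purities, and convexity of the trace norm extends the bound to all separable states. The only blemishes are cosmetic: the outer product should be $\vect(\rho_A)\vect(\rho_B)^{T}$ rather than $\vect(\rho_A)\vect(\rho_B)^{\ast T}$ under the index convention of \eqref{eq:realignment_operator} (irrelevant for the norm), and the restriction to pure $\sigma_m^A,\sigma_m^B$ is unnecessary since $\mytr(\sigma^2)\leq 1$ holds for any density operator.
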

Note that $\mytr( \sqrt{\rho_R^\dagger \rho_R} )$ corresponds to the trace norm of $\rho_R$ (i.e., the sum of its singular values).
Computing the norm scales as $\mathcal{O}(\min(d_A^4d_B^2, d_A^2 d_B^4))$ because $\rho_R$ is a matrix of size $d_A^2\times d_B^2$ \footnote{
Sec.~8.6 of Ref.~\cite{golub_matrix_2013} discusses various algorithms with a scaling of $\mathcal{O}(mn^2)$ for computing the singular values $\{\sigma_i(M)\}_{i=1}^{n}$ of a (possibly dense) matrix $M\in \C^{m\times n}$ with $m\geq n$.
Applied to $\rho_R\in\C^{d_A^2\times d_B^2}$ this yields: If $d_A \geq d_B$, we can compute $\sigma_i(\rho_R)$ for $1\leq i\leq d_B^2$ in $\mathcal{O}(d_A^2 d_B^4)$; if $d_A \leq d_B$, we can compute $\sigma_i(\rho_R^T) = \sigma_i(\rho_R)$ for $1\leq i\leq d_A^2$ in $\mathcal{O}(d_B^2 d_A^4)$ as $\rho_R^T \in \C^{d_B^2\times d_A^2}$.
In either case, computing the trace norm of $\rho_R$ scales as $\mathcal{O}(\min(d_A^4 d_B^2, d_A^2 d_B^4))$.
}.
The realignment criterion detects some PPT entanglement, even though it does not detect all NPT entangled states.
Hence, it is neither stronger nor weaker than the PPT criterion.

Furthermore, the realignment criterion can be expressed in terms of the correlation tensor $T_{a,b} = \mytr(\rho \, (G_a^A \otimes G_b^B))$, where $\{G_a^A\}_{a=0}^{d_A^2-1}$ and $\{G_b^B\}_{b=0}^{d_B^2-1}$ form orthonormal matrix bases of $\denop(\hs_A)$ and $\denop(\hs_B)$, respectively.
The normalization condition is $\mytr(G_a^A (G_{a'}^A)^\dagger) = \delta_{a,a'}$ and $\mytr(G_b^B (G_{b'}^B)^\dagger) = \delta_{b,b'}$.
With this, Thm.~\ref{thm:realignment} can be expressed as \cite{sarbicki_family_2020}
\begin{align} \label{eq:realignment_correlation_tensor}
    \Vert T \Vert_{\mytr} > 1 \Longrightarrow \rho \in \ENT \;,
\end{align}
where $\Vert T \Vert_{\mytr} = \mytr(\sqrt{TT^\dagger})$ is the trace norm of $T$.


We conclude this section by noting three similarities between these two entanglement criteria.
(i) Both are based on permutations of the basis states.
(ii) The defining operations represent generally non-physical transformations of the state $\rho$: The transposition is a positive but not completely positive map, and realignment does not preserve hermiticity.
(iii) Among all entanglement criteria for bipartite systems that are based on permutations of the basis states and taking the trace norm, the only two nontrivial, combinatorially independent criteria are the PPT and the realignment criteria \cite{Horodecki_SeparabilityMixedQuantum_2006, wocjan_characterization_2005}.
Hence, specific other permutations of the basis elements, similar to \eqref{eq:partial_transpose} and \eqref{eq:realignment_operator}, recover the criteria:
First, both are trivially invariant under exchanging $\hs_A$ and $\hs_B$, i.e., mapping $\rho\mapsto \mathds{F} \, \rho \, \mathds{F}^\dagger$ where $\mathds{F} = \sum_{i=0}^{d_B-1} \sum_{j=0}^{d_A-1} |i\rangle\langle j| \otimes |j\rangle\langle i|$.
Furthermore, for the PPT criterion, we can equivalently take the partial transposition on the first subsystem $\rho^{T_A}$, acting as $(|i\rangle \langle j| \otimes |k\rangle\langle l|)^{T_A} = |j\rangle\langle i|\otimes |k\rangle\langle l|$ on the basis states.
This leaves the criterion unchanged as the spectrum of $\rho^\Gamma = (\rho^{T_A})^T$ is unaffected by transposition.
Regarding the realignment criterion, we note that $\mathds{F}_A \hat{R}(\rho)$, $\hat{R}(\rho) \mathds{F}_B$, $\mathds{F}_A\hat{R}(\rho) \mathds{F}_B$, $\hat{R}(\rho)^T$, $\mathds{F}_A \hat{R}(\rho)^T$, $\hat{R}(\rho)^T \mathds{F}_B$, $\mathds{F}_A\hat{R}(\rho)^T \mathds{F}_B$ define different permutations of the basis states, where $\mathds{F}_X = \sum_{i=0}^{d_X-1} \sum_{j=0}^{d_X-1} |i\rangle\langle j| \otimes |j\rangle\langle i|$ with $X\in\{A,B\}$ are the flip operators on $\hs_A$ and $\hs_B$.
As the singular values of $\hat{R}(\rho)$ are invariant under taking the transposition or multiplying it by unitaries from the left or right, these permutations again lead to the realignment criterion.

\section{Bell-Diagonal States with Weyl-Structure} \label{sec:BDS}

Let $\hs = \hs_A \otimes \hs_B$ with $\dim(\hs_A) = \dim(\hs_B) = d$ be the Hilbert space of a bipartite qudit system with equal local dimensions.
Maximally entangled states $|\psi\rangle\in \hs$, i.e., pure states satisfying $\mytr_A(|\psi\rangle\langle\psi|) = \mytr_B(|\psi\rangle\langle\psi|) = \frac{1}{d} \id_d$, are called Bell states on $\hs$.
A set $\{|\psi_i\rangle\}_{i=0}^{d^2-1}$ of $d^2$ orthonormal Bell states on $\hs$ is referred to as a complete Bell basis of $\hs$.
A particularly useful one is generated using the $d$-dimensional unitary Weyl-Heisenberg operators
\begin{align} \label{eq:Wkl}
    \Wkl = \sumj \omega^{jk} \, |j\rangle\langle j+l|
\end{align}
where $k,l\in\{0,1,\dots,d-1\}$ and $\omega := e^{\frac{2\pi i}{d}}$, and, throughout this paper, variables taking values in $\{0,1,\dots, d-1\}$ are understood as elements of $\ZZ_d:= \ZZ/d\ZZ$, i.e., the ring of integers with addition and multiplication modulo $d$.
The matrices \eqref{eq:Wkl} satisfy the so-called Weyl relations
\begin{gather}
    \begin{aligned} \label{eq:Weyl_relations}
        W_{i,j} W_{k,l} &= \omega^{jk} \, W_{i+k,j+l} \;,\\
        \Wkl^\dagger &= \omega^{kl} W_{-k,-l} = \Wkl^{-1} \;, \\
        \Wkl^\ast &= W_{-k,l} \;, \\
        \Wkl^T &= \omega^{-kl} W_{k,-l} \;,
    \end{aligned}
\end{gather}
where $\Wkl^\ast$, $\Wkl^T$, and $\Wkl^\dagger$ denote the complex conjugate, transpose, and conjugate transpose of $\Wkl$, respectively.
Writing $\Wkl = Z^k X^l$ with $Z = \sumj \omega^{j} |j\rangle\langle j|$ and $X = \sumj |j\rangle\langle j+1|$, the indices $k$ and $l$ in $\Wkl$ can be identified with phase and shift operations, respectively.
The Bell states of interest for this work are defined by
\begin{align} \label{eq:Omegakl}
    |\Omega_{k,l}\rangle &:= (\Wkl \otimes \id_d) |\Omega_{0,0}\rangle \;,
\end{align}
where $|\Omega_{0,0}\rangle = \frac{1}{\sqrt{d}} \sum_{i=0}^{d-1} |i\rangle\otimes |i\rangle$.
Alternatively, $|\Omega_{k,l}\rangle = \frac{1}{\sqrt{d}} \mathrm{vec}(W_{k,l})$, where the vectorization $\mathrm{vec}(\cdot)$ is defined by its action on the computational basis states of $\linop(\hs)$ by $\mathrm{vec}(|i\rangle\langle j|) = |i\rangle\otimes|j\rangle$ (which extends to general operators $M\in\linop(\hs)$ by linearity).
The set $\{|\Omega_{k,l}\rangle\}_{k,l=0}^{d-1}$ constitutes a complete Bell basis \footnote{This is also called the standard or ``magic'' Bell basis due to its special properties given by the Weyl relations \eqref{eq:Weyl_relations}, and to distinguish it from the generalization considered in Ref.~\cite{popp_special_2024}. This generalization, however, lacks many desirable features of the standard Bell basis, e.g., the generating unitaries generally do not form a group.} because the operators \eqref{eq:Wkl} are orthogonal with respect to the Hilbert-Schmidt inner product, $\mytr(W_{i,j}^\dagger W_{k,l}) = d \, \delta_{i,k} \delta_{j,l}$, and $\langle \Omega_{0,0}| (A\otimes \id) |\Omega_{0,0}\rangle = \frac{1}{d} \mytr(A)$ for all $A\in\linop(\hs_A)$.
The set of states that are diagonal in this basis, also known as Bell-diagonal states, is defined as
\begin{align} \label{eq:magic_simplex}
    \mathcal{M}_d := \left\{ \rho = \sumkl \ckl \, \Pkl \; \Big| \; \sumkl \ckl = 1, \; \ckl \geq 0 \right\} \;,
\end{align}
where $\Pkl := |\Omega_{k,l}\rangle\langle\Omega_{k,l}|$ are the projectors onto the Bell states \eqref{eq:Omegakl}.
Mathematically, equation \eqref{eq:magic_simplex} defines a simplex, which, due to the group structure inherited by the Weyl relations \eqref{eq:Weyl_relations}, is sometimes referred to as a ``magic'' simplex \cite{baumgartner_special_2007}.

\subsection{\texorpdfstring{Group Structure of $\mathcal{M}_d$}{Group Structure of Md}}

Bell-diagonal states $\rho\in\mathcal{M}_d$ possess an underlying group structure\footnote{An example of the following group-theoretic considerations for $d=3$ is presented in App.~\ref{app:group_structure_M_3}.} that is known to have a significant impact on their entanglement class \cite{popp_special_2024, popp_comparing_2023}.
In Sec.~\ref{sec:Realignment_qutrit_BDS} and \ref{sec:PPT_for_qutrit_BDS}, we show that this group property is strongly related to entanglement detection via the PPT and realignment criteria for $\rho\in\mathcal{M}_3$.
To formalize this, we can arrange the $d^2$ coefficients $\ckl$ in \eqref{eq:magic_simplex} in a $d\times d$ coefficient matrix $C = (\ckl)_{kl}$.
For example, for $d=3$, this matrix is given by
\begin{align} \label{eq:weight_matrix_d3}
    C = \begin{pmatrix}
        c_{0,0} & c_{0,1} & c_{0,2}\\
        c_{1,0} & c_{1,1} & c_{1,2} \\
        c_{2,0} & c_{2,1} & c_{2,2}
        \end{pmatrix} \;.
\end{align}
\begin{figure}
    \includegraphics[width=0.85\linewidth]{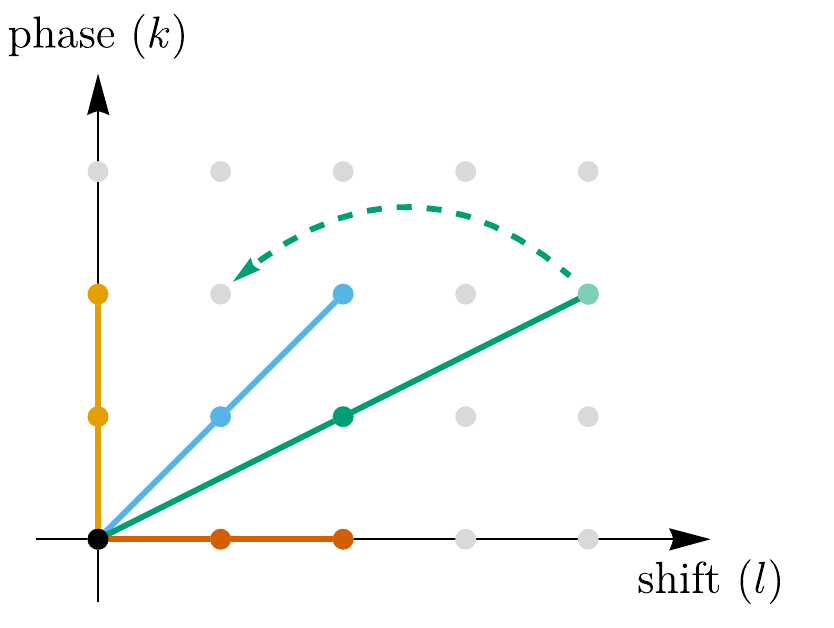}
    \caption{Discrete phase $\ZZ_3^2$ for a Bell-diagonal qutrit state.
    The vertices are associated with the Bell states $\Pkl$.
    The colored lines represent the cosets $S(0,0)$ for all $S\in\mathcal{S}_3$, corresponding to the subgroup states $\rho_\ell$ with $(0,0)\in\ell \in\mathcal{C}_3$.
    For more details, see App.~\ref{app:group_structure_M_3}.
    }
    \label{fig:discrete_phase_space}
\end{figure}
Similarly, we can identify the $d^2$ basis states $\Pkl$ in \eqref{eq:magic_simplex} with points in a discrete phase space $\ZZ_d^2$ (cf. Ref.~\cite{baumgartner_state_2006}; see Fig.~\ref{fig:discrete_phase_space}) where the mapping $\Pkl \mapsto (W_{i,j} \otimes \id_d)\,\Pkl\,(W_{i,j}^\dagger \otimes \id_d) = P_{k+i,l+j}$ is an automorphism.
In particular, $\Pkl = (\Wkl\otimes\id_d) \, P_{0,0} \, (\Wkl^\dagger \otimes \id_d)$, and thus each $\Pkl \in\mathcal{M}_d$ can be uniquely associated to one $\Wkl$.
Admitting a slight abuse of notation (cf. App.~\ref{app:proof_isomorphism}), we define the projective Weyl-Heisenberg group as
\begin{align}\label{eq:W-H-group_projective}
    \mathcal{W}_d := \{\Wkl\}_{k,l\in\ZZ_d} \cong \ZZ_d^2 \;,
\end{align}
where we identify elements that differ only up to a phase (cf. Sec.~4 in Ref.~\cite{baumgartner_special_2007}).
Here, $\ZZ_d^2 = \ZZ_d \times \ZZ_d$ is the direct product of two copies of the cyclic group $\ZZ_d$.
We give a rigorous proof of the isomorphism $\mathcal{W}_d \cong \ZZ_d^2$ in App.~\ref{app:proof_isomorphism}.
Consequently, the structure of $\mathcal{M}_d$ is inherited from the group structure of  $\ZZ_d^2$, which is the central object of the following considerations.

Let $\mathcal{S}_d$ be the set of $d$-element subgroups of $\ZZ_d^2$.
Each $S\in\mathcal{S}_d$ gives rise to a line or sublattice in $\ZZ_d^2$ via the coset $S(i,j)=\{(k,l) + (i,j) \,|\, (k,l) \in S\} $ for some $(i,j)\in\ZZ_d^2$ (cf.~\cite{baumgartner_special_2007, baumgartner_state_2006} and Fig.~\ref{fig:discrete_phase_space}).
For $S\in\mathcal{S}_d$, we define the set of all cosets by
\begin{align}\label{eq:striations}
    \mathcal{C}(S) :=  \{S(i,j) \,|\, (i,j)\in\ZZ_d^2\} \;.
\end{align}
As each point $(i,j)\in\ZZ_d^2$ appears exactly once per $\mathcal{C}(S)$, we call $\mathcal{C}(S)$ a ``striation'' of $\ZZ_d^2$ generated by $S\in\mathcal{S}_d$ \cite{gibbons_discrete_2004} (cf. Fig.~\ref{fig:striations}).
\begin{figure}
    \centering
    \includegraphics[width=0.85\linewidth]{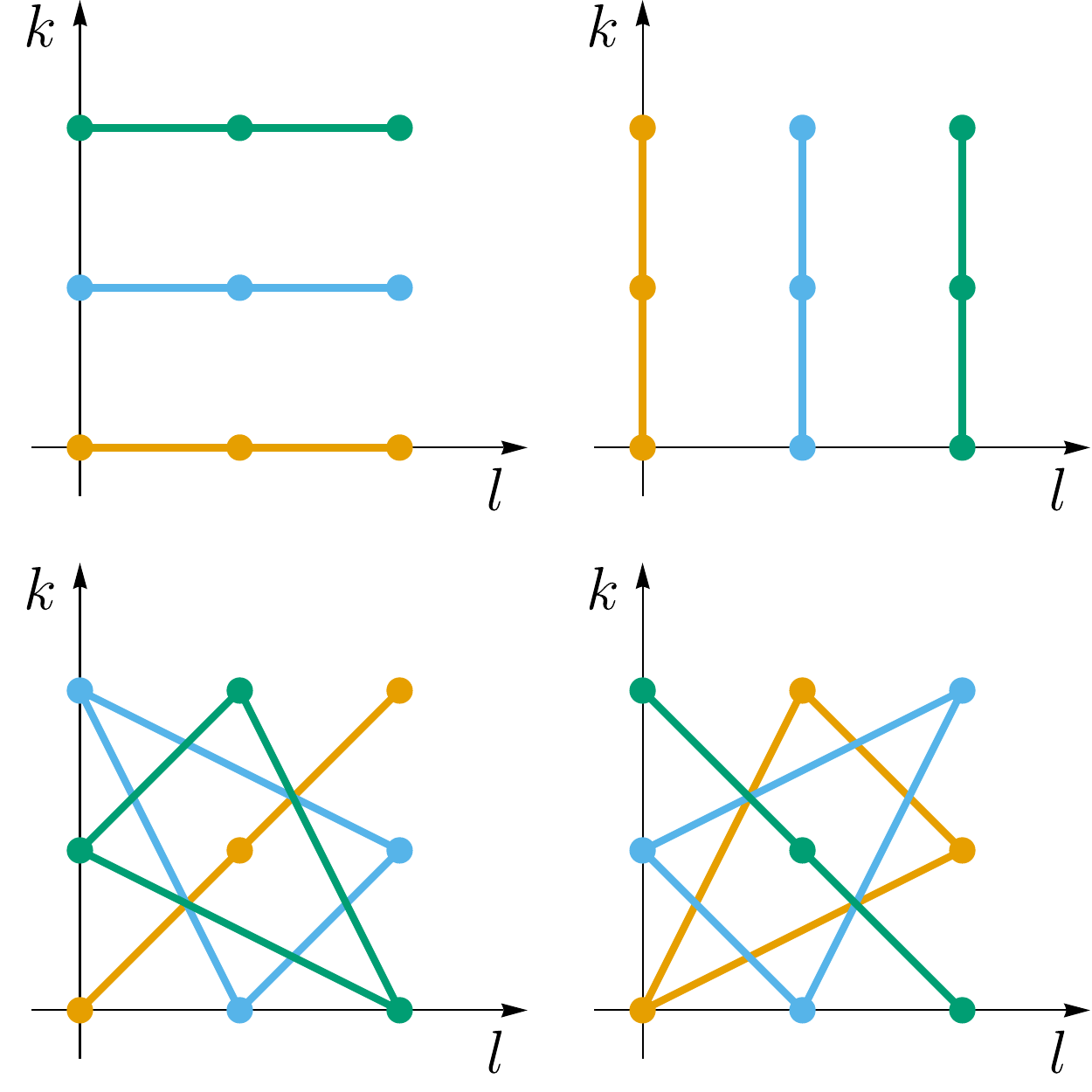}
    \caption{The striations in $\ZZ_3^2$ corresponding to the cosets $\mathcal{C}(S_{0,1})$ (top left), $\mathcal{C}(S_{1,0})$ (top right), $\mathcal{C}(S_{1,1})$ (bottom left), and $\mathcal{C}(S_{1,2})$ (bottom right), given in \eqref{eq:C(S_0,1)}-\eqref{eq:C(S_1,2)} of App.~\ref{app:group_structure_M_3}.
    }
    \label{fig:striations}
\end{figure}
The set of all cosets in $\ZZ_d^2$ is given by
\begin{align} \label{eq:set_all_cosets}
    \mathcal{C}_d := \bigcup_{S\in\mathcal{S}_d} \mathcal{C}(S) \;.
\end{align}
Note that for prime $d$, $\ZZ_d^2$ is an affine plane $\mathrm{AG}(2, d)$ \cite{beth_design_1999} and every nontrivial subgroup $S\in\mathcal{S}_d$ of order $d$ defines a striation via \eqref{eq:striations}.
For composite $d$, the set of size-$d$ subgroups $\mathcal{S}_d$ and the associated striation picture may be incomplete; in this work, we only use the explicit subgroups and coset expansions for $d = 3$.

Bell-diagonal subgroup states $\{\rho_{\ell}\}_{\ell\in\mathcal{C}_d}$ are highly symmetric separable states in $\mathcal{M}_d$ \cite{baumgartner_special_2007}.
For every coset $\ell \in\mathcal{C}_d$, they are defined by having $c_{k,l} = \frac{1}{d}$ for $(k,l)\in\ell$, while all other $\ckl$ are zero, i.e.,
\begin{align} \label{eq:subgroup_state}
    \rho_\ell = \frac{1}{d} \sum_{(k,l) \in\ell} \Pkl \;.
\end{align}
They form the outermost separable states in $\mathcal{M}_d$, i.e., separable states with highest purity in $\mathcal{M}_d \cap \SEP$ (cf. Thm.~3 and Prop.~6 in \cite{baumgartner_special_2007}), and the convex hull of all subgroup states is known as the (separable) kernel polytope.
This is illustrated in Fig.~\ref{fig:simplex_qubits} which shows the state space $\mathcal{M}_2$ for Bell-diagonal qubits, including the maximally entangled Bell states $\{P_{k,l}\}_{k,l=0}^{1}$, the six separable subgroup states $\{\rho_{\ell_i}\}_{i=1}^{6}$, and the kernel polytope.
Note, however, that except for $d=2$, the kernel polytope is not equivalent to the set of separable states: For $d\geq 3$, there exist separable states $\rho_\SEP \in \mathcal{M}_d \cap \SEP$ that are not convex combinations of the subgroup states $\rho_\ell$.
\begin{figure}
    \centering
    \includegraphics[width=0.75\linewidth]{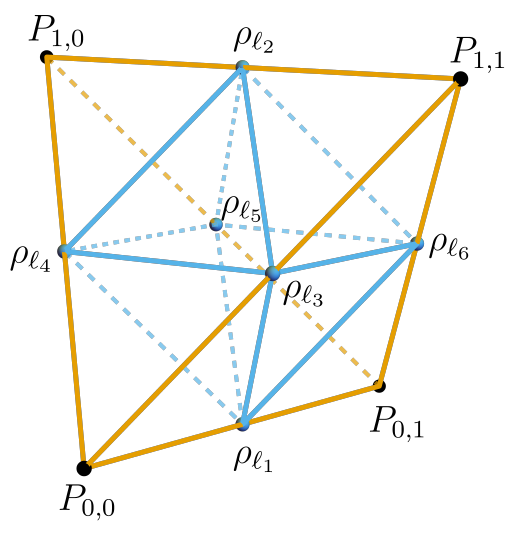}
    \caption{The state space $\mathcal{M}_2$ of Bell-diagonal qubits, also known as ``magic'' simplex (orange tetrahedron).
    The maximally entangled Bell state projectors $P_{k,l}$ are the extreme points of the set, whereas the subgroup states $\rho_{\ell_i}$ are extreme points of the set of separable states (blue octahedron).
    Note that in this case, the separable states coincide with the convex hull of the six subgroup states (kernel polytope), which is generally not true for $d\geq3$ (cf. Ref.~\cite{popp_comparing_2023}).
    }
    \label{fig:simplex_qubits}
\end{figure}

\subsection{\texorpdfstring{Bloch Vector of $\rho\in\mathcal{M}_d$ and the Discrete Fourier Transform}{Bloch Vector of rho in Md and the Discrete Fourier Transform}}
\label{sec:bloch_vec_and_DFT}

As the Weyl-Heisenberg operators \eqref{eq:Wkl} constitute a basis for the space of $d$-dimensional density matrices \cite{bengtsson_geometry_2006}, any bipartite qudit state can be written as
\begin{align} \label{eq:Bloch_repr_general}
    \rho = \frac{1}{d^2} \sum_{i,j,k,l=0}^{d-1} \beta_{ij,kl} \, W_{i,j} \otimes W_{k,l} \;,
\end{align}
where $\beta_{ij,kl}$ are the elements defining the generalized Bloch vector (also known as correlation tensor).
Using $\beta_{ij,kl} = \mytr\left( \rho \, (W_{i,j}\otimes W_{k,l})^\dagger \right)$ and $\rho\in\mathcal{M}_d$, we obtain
\begin{align} \label{eq:Bloch_vector_beta}
    \beta_{ij,kl} = \sum_{x,y=0}^{d-1} c_{x,y} \,\omega^{yi-xj}  \, \delta_{i,-k} \delta_{jl} \;.
\end{align}
With this, we can write \cite{narnhofer_entanglement_2006}
\begin{align} \label{eq:BDS_with_Bloch_vector_A}
    \rho = \frac{1}{d^2} \sum_{i,j=0}^{d-1} B_{j,i} \, W_{i,j} \otimes W_{i,j}^\ast \;,
\end{align}
where we defined the (reduced) Bloch vector 
\begin{align} \label{eq:reduced_bloch_vector_BDS}
    B_{j,i} := \beta_{ij,(-i)j} = \sumkl c_{k,l} \, \omega^{il-jk} \;.
\end{align}
The form of $\rho\in\mathcal{M}_d$ in \eqref{eq:BDS_with_Bloch_vector_A} can equivalently be derived by applying the Weyl twirl channel to the state \eqref{eq:Bloch_repr_general} (see Ref.~\cite{popp_special_2024}).

Note that the Bloch matrix $B := (B_{j,i})_{ji}$ is obtained from the coefficient matrix $C=(\ckl)_{kl}$ by a discrete Fourier transform (DFT) (cf. Ref.~\cite{narnhofer_entanglement_2006}).
In particular, with the unitary DFT matrix
\begin{align} \label{eq:DFT-matrix}
    F:=\frac{1}{\sqrt{d}} \sum_{i,j=0}^{d-1} \omega^{-ij} |i\rangle\langle j| \;,
\end{align}
we get $B = d \, FCF^\dagger$.
Consequently, the Bloch vector $B$ of $\rho\in\mathcal{M}_d$ carries information about the discrete phase space $\ZZ_d^2$ and the Weyl-Heisenberg group $\mathcal{W}_d$: Coset probabilities in $C$ get mapped to their respective annihilator subgroup cosets via the DFT (cf. Thm.~2.7.1 in Ref.~\cite{rudin_fourier_1962}).
In other words, $C$ contains the phase-space probability density function associated with $\rho$ via the coefficients $\{c_{k,l}\}_{k,l=0}^{d-1}$ on $\ZZ_d^2$, whereas $B$ encodes the corresponding discrete Fourier transform on $\ZZ_d^2$ of $C$: for $(i,j)\in\ZZ_d^2$, the entry $B_{j,i}$ in \eqref{eq:reduced_bloch_vector_BDS} is the Fourier coefficient of the function $(k,l)\mapsto c_{k,l}$ with respect to the character $\chi_{i,j}(k,l):=\omega^{il-jk}$.

As preparation for the next section, we conclude here with a short discussion of the properties of $B$ concerning the entanglement class of $\rho\in\mathcal{M}_d$, which is not affected by specific symmetry transformations of $C$ \cite{baumgartner_special_2007}.
For example, for prime dimension $d$, we can understand $\ZZ_d^2$ as the affine plane $\mathrm{AG}(2,d)$ \cite{beth_design_1999}, where these symmetries are given by the affine group $\mathrm{AGL}(2,d)$ (acting as $x \mapsto Ax+b$ with $A\in\mathrm{GL}(2,d)$ and $x,b\in\ZZ_d^2$).
Note, however, that while these transformations preserve entanglement, they cannot be realized by local unitaries acting on the subsystems alone.
Ref.~\cite{baumgartner_special_2007} shows that local unitaries (realizing shifts, vertical shears, and rotations in $\ZZ_d^2$) as well as a global anti-unitary (inverting the phase $k$ by conjugation of $\rho$) are required to realize all symmetries.
However, as $B$ generally changes under these mappings, the information about the entanglement class contained in $B$ must be invariant under such transformations as well.

\section{Realignment Criterion for Bell-Diagonal States} \label{sec:Realignment_for_BDS}

The realignment criterion (Thm.~\ref{thm:realignment}) can be used to detect entangled quantum states, but it lacks a clear operational and geometric meaning.
In this section, we present the realignment criterion for Bell-diagonal qudit-qudit states $\rho\in\mathcal{M}_d$ in a form that admits a clear geometric interpretation in terms of the Bloch vector:
\begin{mytheorem}
\label{thm:realignment_BDS}
    The realignment criterion for $\rho\in\mathcal{M}_d$ is given by
    \begin{align} \label{eq:realignment_bds}
        \Vert B \Vert_1 >d \Longrightarrow \rho \in \ENT \;,
    \end{align}
    where $B=(B_{j,i})_{ji}$ is the Bloch vector of $\rho$ as defined in \eqref{eq:reduced_bloch_vector_BDS}, and $\Vert B\Vert_1 = \sum_{i,j=0}^{d-1} \left| B_{j,i} \right|$ denotes the entry-wise 1-norm.
\end{mytheorem}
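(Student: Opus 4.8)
The plan is to compute the realigned operator $\rho_R=\hat R(\rho)$ explicitly for $\rho\in\M_d$ from its Weyl--Bloch representation \eqref{eq:BDS_with_Bloch_vector_A}, and to observe that $\rho_R$ is again \emph{diagonal in the Bell basis} $\{|\Omega_{k,l}\rangle\}$ --- only with the (generically complex) coefficients $B_{j,i}/d$ in place of the probabilities $\ckl$. Once this is established, the trace norm entering Thm.~\ref{thm:realignment} can be read off directly from the moduli of these coefficients, and the stated criterion follows by rescaling by $d$.

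Concretely, I would first record how $\hat R$ acts on product operators via vectorization: from \eqref{eq:realignment_operator}, writing $|i\rangle\langle k|\otimes|j\rangle\langle l| = (|i\rangle\otimes|j\rangle)(\langle k|\otimes\langle l|)$ and using $\vectorized{|i\rangle\langle j|}=|i\rangle\otimes|j\rangle$, one gets $\hat R(A\otimes B)=\vectorized{A}\,\vectorized{B^\ast}^\dagger$ for all $A,B\in\linop(\hs)$. Applying this term by term to \eqref{eq:BDS_with_Bloch_vector_A}, using $(W_{i,j}^\ast)^\ast=W_{i,j}$ and $|\Omega_{i,j}\rangle=\tfrac{1}{\sqrt d}\vectorized{W_{i,j}}$, yields $\hat R(W_{i,j}\otimes W_{i,j}^\ast)=\vectorized{W_{i,j}}\,\vectorized{W_{i,j}}^\dagger = d\,|\Omega_{i,j}\rangle\langle\Omega_{i,j}| = d\,P_{i,j}$, and hence $\rho_R=\tfrac1d\sum_{i,j}B_{j,i}\,P_{i,j}$.

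Since $\{|\Omega_{i,j}\rangle\}_{i,j}$ is an orthonormal Bell basis, the $P_{i,j}$ are mutually orthogonal rank-one projectors, so the above is simultaneously a spectral decomposition of the normal operator $\rho_R$. Therefore $\sqrt{\rho_R^\dagger\rho_R}=\tfrac1d\sum_{i,j}|B_{j,i}|\,P_{i,j}$, so that $\mytr(\sqrt{\rho_R^\dagger\rho_R})=\tfrac1d\sum_{i,j}|B_{j,i}|=\tfrac1d\,\Vert B\Vert_1$. Substituting this into the criterion $\mytr(\sqrt{\rho_R^\dagger\rho_R})>1$ from Thm.~\ref{thm:realignment} gives $\Vert B\Vert_1>d\Longrightarrow\rho\in\ENT$. (An alternative is to work from the correlation-tensor form \eqref{eq:realignment_correlation_tensor} and note that, in the orthonormal Weyl basis $\{W_{i,j}/\sqrt d\}$, the correlation tensor of $\rho\in\M_d$ is diagonal with entries proportional to $B_{j,i}$; I prefer the vectorization route since it sidesteps the index bookkeeping of the Weyl relations \eqref{eq:Weyl_relations}.)

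The one step I expect to need care is the vectorization identity for $\hat R$ and, in particular, the complex conjugation landing on the second tensor factor: it is exactly the ``$W^\ast$'' appearing in \eqref{eq:BDS_with_Bloch_vector_A} that cancels this conjugation and makes each term collapse to a genuine Bell projector $P_{i,j}$ rather than an off-diagonal $|\Omega\rangle\langle\Omega'|$ with mismatched labels; without this cancellation $\rho_R$ would not be diagonal in the Bell basis and its singular values could not be extracted so cheaply. Everything else --- linearity of $\hat R$, orthonormality of the Bell basis giving the singular values of $\rho_R$ as $|B_{j,i}|/d$, and the final rescaling --- is routine.
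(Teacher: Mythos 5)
Your proof is correct and lands on the same key fact as the paper's Lemma~\ref{lemma:realigned_BDS_and_evals} --- that $\rho_R$ is normal and diagonal in the Bell basis with eigenvalues $B_{j,i}/d$ --- but you reach it by a genuinely different computation. The paper starts from the coefficient form $\rho=\sumkl \ckl P_{k,l}$, expands in the computational basis, applies $\hat R$ index by index to get $\rho_R=\tfrac1d\sumkl\ckl\,W_{k,l}\otimes W_{k,l}^\ast$, and then invokes the fact that the Bell states are common eigenvectors of every $W_{k,l}\otimes W_{k,l}^\ast$ with eigenvalue $\omega^{il-jk}$, so that the eigenvalue of $\rho_R$ on $|\Omega_{i,j}\rangle$ is the Fourier sum $\tfrac1d\sumkl\ckl\,\omega^{il-jk}=\tfrac1d B_{j,i}$. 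You instead start from the Bloch representation \eqref{eq:BDS_with_Bloch_vector_A} and use the vectorization identity $\hat R(A\otimes B)=\vect(A)\,\vect(B^\ast)^\dagger$, under which each term $W_{i,j}\otimes W_{i,j}^\ast$ collapses directly to $d\,P_{i,j}$; the identity is legitimate (the map $B\mapsto\vect(B^\ast)^\dagger$ is linear, being a composition of two antilinear maps with a linear one, so it suffices to check it on basis elements, as you do), and you rightly flag that the conjugate on the second factor is exactly what makes each term a genuine projector rather than an off-diagonal $|\Omega\rangle\langle\Omega'|$. The two routes are dual to one another: the paper shows that $\hat R$ sends the Bell-projector expansion to a Weyl-operator expansion with the same coefficients $\ckl$, while you show that it sends the Weyl (Bloch) expansion to a Bell-projector expansion with the same coefficients $B_{j,i}$ --- two faces of the DFT relation $B=dFCF^\dagger$. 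Your version buys an index-free derivation in which the diagonality of $\rho_R$ is manifest without the separate eigenvector fact; the paper's version produces the intermediate form \eqref{eq:realigned_BDS} in terms of the physical probabilities $\ckl$. The concluding step --- reading off $\mytr\bigl(\sqrt{\rho_R^\dagger\rho_R}\bigr)=\tfrac1d\Vert B\Vert_1$ from the orthogonal spectral decomposition and inserting into Thm.~\ref{thm:realignment} --- is identical in both.
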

Before we prove the theorem, two remarks are in order.
First, this theorem reduces the computational complexity of computing the realignment criterion for $\rho\in\mathcal{M}_d$ compared to Thm.~\ref{thm:realignment}.
Particularly, as $B$ is proportional to the DFT of $C$, evaluating \eqref{eq:realignment_bds} scales only as $\mathcal{O}(d^2\log(d))$ instead of $\mathcal{O}(d^6)$ for \eqref{eq:realignment_crit} (cf. Ref.~\cite{golub_matrix_2013}).
Second, \eqref{eq:realignment_bds} is equivalent to \eqref{eq:realignment_correlation_tensor} if the correlation tensor is diagonal (as is the case for Bell-diagonal states; cf. \eqref{eq:BDS_with_Bloch_vector_A}), in which case its trace norm reduces to the entry-wise 1-norm.
The extra factor $d$ in \eqref{eq:realignment_bds} is due to the unnormalized Weyl matrices satisfying $\mytr(W_{i,j}W_{k,l}^\dagger) = d \, \delta_{i,k}\delta_{j,l}$.
Also note that a similar result for Pauli-diagonal matrices is presented in Ref.~\cite{muller-hermes_decomposable_2021}.

\begin{proof}[Proof of Thm.~\ref{thm:realignment_BDS}]
    Using \eqref{eq:Wkl} and \eqref{eq:Omegakl}, we write $\rho\in\mathcal{M}_d$ as
    \begin{align} \label{eq:rho_expanded}
        \rho =  \frac{1}{d} \sum_{i,j,k,l=0}^{d-1} c_{k,l} \,\omega^{k(i-j)} |i-l\rangle\langle j-l| \otimes |i\rangle\langle j| \;.
    \end{align}
    With \eqref{eq:realignment_operator}, the realigned operator $\rho_R := \hat{R}(\rho)$ evaluates to
    \begin{align}
        \rho_R &= \frac{1}{d} \sum_{i,j,k,l=0}^{d-1} \ckl \,\omega^{k(i-j)} \, |i-l\rangle\langle i| \otimes |j-l\rangle\langle j| \\
        &= \frac{1}{d} \sumkl \ckl \, W_{k,l}\otimes W_{k,l}^\ast \;.
    \end{align}
    The eigenvectors of $W_{k,l}\otimes W_{k,l}^\ast$ are the Bell states $|\Omega_{i,j}\rangle$ in \eqref{eq:Omegakl} with respective eigenvalue $\omega^{il-jk}$ \cite{popp_special_2024}.
    Hence, we obtain
    \begin{align}
        \rho_R \, |\Omega_{i,j}\rangle
        =  \frac{1}{d} \sumkl \ckl \, \omega^{il-jk} |\Omega_{i,j}\rangle \;,
    \end{align}
    and we can diagonalize $\rho_R$ via the unitary $U=\sumkl (|k\rangle \otimes |l\rangle )\langle \Okl|$ to obtain $U\rho_R U^\dagger = \frac{1}{d} \sum_{i,j=0}^{d-1} B_{j,i} |i\rangle\langle i| \otimes |j\rangle\langle j|$.
    Thus,
    \begin{align}
        \mytr\left( \sqrt{\rho_R^\dagger \rho_R} \right)
        =  \frac{1}{d} \sum_{i,j=0}^{d-1} |B_{j,i}| \;.
    \end{align}
    Inserting this into Thm.~\ref{thm:realignment} completes the proof.
\end{proof}

\subsection{\texorpdfstring{Realignment Criterion for Bell-Diagonal Qutrits ($d=3$)}{Realignment Criterion for Bell-Diagonal Qutrits (d3)}} \label{sec:Realignment_qutrit_BDS}

Thm.~\ref{thm:realignment_BDS} gives a geometric entanglement condition based on the Bloch vector of $\rho\in\mathcal{M}_d$.
Because the Bloch vector $B$ in \eqref{eq:reduced_bloch_vector_BDS} can be obtained by the DFT of the coefficient matrix $C$, it encodes the structure of the discrete phase space $\ZZ_d^2$.
In particular, $B$ contains information about the subgroups of $\ZZ_d^2$.

To exemplify this, we present the explicit version of Thm.~\ref{thm:realignment_BDS} for Bell-diagonal qutrit states $\rho\in\mathcal{M}_3$.
We begin by using $|B_{0,0}| = |\sumkl c_{k,l}| = 1$ to obtain
\begin{align}
    \Vert B \Vert_1 &= 1+\sum_{\substack{i,j=0 \\ (i,j)\neq (0,0)}}^{2} \left| B_{j,i} \right| \;.
\end{align}
Second, we compute
\begin{align}
    |B_{0,1}|^2 &= \sum_{k,l,r,s=0}^{2} \, c_{k,l} \, c_{r,s} \, \omega^{l-s} \\
    &= \sum_{k,l,r,s=0}^{2} \, c_{k,l} \, c_{r,s} \, \omega^{-l+s} \\
    &= \sum_{k,l,r,s=0}^{2} \, c_{k,l} \, c_{r,s} \, \omega^{2l-2s} = |B_{0,2}|^2 \;,
\end{align}
where we used that $|B_{0,1}|\in\R$ and $c_{k,l} \, c_{r,s}\in\R$ in the second line, and $\omega^{2} = \omega^{-1}$ for $d=3$ in the third line.
Similarly, we obtain $|B_{1,0}| = |B_{2,0}|$, $|B_{1,2}| = |B_{2,1}|$, and $|B_{1,1}| = |B_{2,2}|$.
The inequality in \eqref{eq:realignment_bds} thus becomes
\begin{align} \label{eq:geometric_realignment_inequality}
    2 (|B_{0,1}|+|B_{1,0}|+|B_{1,1}|+|B_{1,2}|) >2 \;,
\end{align}
which, when inserted into \eqref{eq:realignment_bds}, serves as a geometric version of the realignment criterion for $\rho\in\mathcal{M}_3$, i.e., all Bell-diagonal qutrit states with Bloch vector $(B_{j,i})_{ji}$ satisfying the inequality \eqref{eq:geometric_realignment_inequality} are entangled.
Simplifying the terms on the left hand side using $\sum_{k,l=0}^{2}c_{k,l} = 1$, the subgroups $S\in\mathcal{S}_3$, and the corresponding cosets $\mathcal{C}(S)$ presented in App.~\ref{app:group_structure_M_3}, the realignment criterion \eqref{eq:realignment_bds} for $d=3$ can be written in terms of the coset probabilities $\sum_{(i,j)\in\ell}c_{i,j}$ with $\ell \in \mathcal{C}(S)$ as
\begin{align}\label{eq:realignment_BDS_qutrits}
    \sum_{S\in\mathcal{S}_3} \sqrt{6\sum_{\ell \in \mathcal{C}(S)} \left( \sum_{(i,j) \in \ell} c_{i,j}  \right)^2 -2} \; >2 \;\Rightarrow\;\rho\in\ENT \;.
\end{align}
In words: First, $\sum_{S\in\mathcal{S}_3}$ sums over all subgroups $S\in\mathcal{S}_3$ generating horizontal, vertical, diagonal, and anti-diagonal striations $\mathcal{C}(S)$; second, $\sum_{\ell\in\mathcal{C}(S)}$ sums over all cosets $\ell$ contained in the striation $\mathcal{C}(S)$; third, $\sum_{(i,j)\in\ell}$ sums the coefficients $c_{i,j}$ of each coset $\ell\in\mathcal{C}(S)$.
Consequently, the subgroup structure of $\ZZ_3^2$ contained in the Bloch vector \eqref{eq:reduced_bloch_vector_BDS} is crucial for entanglement detection in $\mathcal{M}_3$ using the realignment criterion.

Furthermore, \eqref{eq:realignment_BDS_qutrits} entails a measurement procedure to obtain the realignment value of $\rho\in\mathcal{M}_3$ by measuring the coset probabilities $\sum_{(i,j)\in\ell}c_{i,j}$ for all $\ell \in \mathcal{C}(S)$.
This requires projecting the state for each $S\in\mathcal{S}_3$ onto the different coset subspaces using the projection operators in $\Pi(S) = \{\sum_{(i,j)\in\ell} P_{i,j} \, | \, \ell \in \mathcal{C}(S)\}$.
In this way, each subgroup gives rise to a projective measurement with measurement probabilities $\{\sum_{(i,j)\in\ell}c_{i,j}\}_{\ell \in \mathcal{C}(S)}$, corresponding to the striation $\mathcal{C}(S)$.
In App.~\ref{app:group_structure_M_3}, we sketch how this measurement can be conducted using local projective measurements.
This requires four different measurement settings, one for each striation $\mathcal{C}(S)$.
Consequently, one can determine whether a state $\rho\in\mathcal{M}_3$ is entangled according to the realignment criterion using only four different local measurements.

\section{\texorpdfstring{PPT Criterion for Bell-Diagonal Qutrits ($d=3$)}{PPT Criterion for Bell-Diagonal Qutrits (d3)}} \label{sec:PPT_for_qutrit_BDS}

For bipartite qubit states $\rho\in\denop(\C^2\otimes\C^2)$, the PPT criterion (Thm.~\ref{thm:PPT_crit}) is necessary and sufficient for entanglement \cite{horodecki_separability_1996}, and it even holds that \cite{augusiak_universal_2008}
\begin{align}\label{eq:equivalences_ENT_NPT_2}
    \rho \in \ENT \Longleftrightarrow \rho^\Gamma \ngeq 0 \Longleftrightarrow \det(\rho^\Gamma) <0 \;.
\end{align}
For general qudit-qudit states $\rho\in\denop(\C^{d_A}\otimes\C^{d_B})$ with $d_A d_B>6$, these equivalences fail and we get the weaker statement
\begin{align} \label{eq:equivalences_ENT_NPT_d}
    \rho \in \ENT \Longleftarrow \rho^\Gamma \ngeq 0 \Longleftarrow \det(\rho^\Gamma) <0 \;.
\end{align}
In this section, we show that for Bell-diagonal qutrit-qutrit states $\rho\in\mathcal{M}_3$, the equivalence for the second implication in \eqref{eq:equivalences_ENT_NPT_d} is restored, namely, $\rho^\Gamma \ngeq 0 \Longleftrightarrow \det(\rho^\Gamma) <0$.
In contrast, even for Bell-diagonal qutrit states, the first implication cannot be promoted to an equivalence because of the existence of PPT-entangled states in $\mathcal{M}_3$, i.e., $\ENT \cap \PPT\cap\mathcal{M}_3 \neq \varnothing$.
Furthermore, our result underscores that the subgroup structure of $\ZZ_3^2 $ plays a crucial role in distinguishing between $\PPT \cap\mathcal{M}_3$ and $\NPT \cap\mathcal{M}_3$.
\begin{mytheorem} \label{thm:PPT_crit_BDS}
    Let $\rho\in\mathcal{M}_3$ and let $\rho^\Gamma$ be the partial transpose of $\rho$.
    It holds that
    \begin{align} \label{eq:PPT_crit_BDS}
        \rho^\Gamma \ngeq 0 \Longleftrightarrow \det(\rho^\Gamma) <0 \;.
    \end{align}
    Explicitly, we have
    \begin{align} \label{eq:det(A_0)}
        \rho^\Gamma \ngeq 0 \; \Longleftrightarrow \;3\sum_{\ell\in\mathcal{C}_3}\left( \prod_{(i,j)\in \ell} c_{i,j} \right) < \sum_{k,l=0}^{2} c_{k,l}^3 \;.
    \end{align}
\end{mytheorem}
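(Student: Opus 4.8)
The plan is to compute $\rho^\Gamma$ for $\rho \in \mathcal{M}_3$ explicitly, recognize it as a direct sum of small blocks whose structure is dictated by the discrete phase space, and then show that positivity of $\rho^\Gamma$ is controlled by a single determinant, which we then expand combinatorially. First I would take the partial transpose of the Bloch representation \eqref{eq:BDS_with_Bloch_vector_A}: using $(W_{i,j}^\ast)^T = W_{i,j}^\dagger$ together with the Weyl relations \eqref{eq:Weyl_relations}, the operator $W_{i,j}\otimes W_{i,j}^\ast$ transforms under $\id\otimes T_B$ in a way that permutes the Bloch labels, so that $\rho^\Gamma = \frac{1}{d^2}\sum_{i,j} B_{j,i}\, W_{i,j}\otimes W_{i,j}$ (up to phases I would track carefully). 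The key structural observation is that $W_{i,j}\otimes W_{i,j}$ acts on $\C^d\otimes\C^d$ by shifting the ``difference'' index $i_1 - i_2$ in a fixed way, so in the basis $\{|a\rangle\otimes|a+m\rangle\}$ the matrix $\rho^\Gamma$ decomposes into $d$ blocks $A_0,\dots,A_{d-1}$ of size $d\times d$, each indexed by the difference $m$. For $d=3$ this gives three $3\times 3$ blocks; by the phase-space automorphism $\Pkl\mapsto P_{k+i,l+j}$ (which acts on $\rho^\Gamma$ by conjugation with local Weyl unitaries and merely permutes the blocks and their entries) all three blocks are unitarily equivalent, so $\rho^\Gamma \geq 0 \iff A_0 \geq 0$.

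Next I would write $A_0$ explicitly in terms of the coefficients $c_{k,l}$. Each diagonal entry of $A_0$ is a sum of three $c_{k,l}$'s over one coset of a particular subgroup (say the ``horizontal'' one), and each off-diagonal entry is, up to a root of unity, again a sum over a coset of a different subgroup — this is exactly where the subgroup/striation structure of $\ZZ_3^2$ enters, because the DFT relating $B$ to $C$ (the identity $B = d\,FCF^\dagger$ from Section~\ref{sec:bloch_vec_and_DFT}) sends coset sums to coset sums. Then I would argue that $A_0 \ngeq 0 \iff \det(A_0) < 0$: this needs that $A_0$ can have at most one negative eigenvalue. The cleanest way is to note $\mytr(A_0) = 1 > 0$ (or $=1/3$ per block, depending on normalization) and that the $2\times2$ principal minors of $A_0$ are nonnegative — the latter should follow from a Cauchy–Schwarz-type bound on coset sums, or alternatively from the fact that the diagonal blocks of a PPT-region boundary state are rank-deficient in a controlled way. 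Establishing "at most one negative eigenvalue" rigorously is the step I expect to be the main obstacle; a robust fallback is to use the known fact that on the magic simplex $\mathcal{M}_3$ the NPT region is connected and its boundary is where a single eigenvalue of $\rho^\Gamma$ vanishes, which forces $\det(\rho^\Gamma)$ to change sign exactly there.

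Finally, with $\rho^\Gamma \ngeq 0 \iff \det(A_0) < 0$ in hand, I would expand $\det(A_0)$ as a $3\times3$ determinant. The Leibniz expansion produces the term $\prod$ of the three diagonal entries (each a product of a row-sum... no — each a sum), so one must multiply out: the diagonal product contributes, among other things, the pure cubes $\sum_{k,l} c_{k,l}^3$ together with cross terms, while the permutation terms involving the roots of unity contribute the mixed products. The roots-of-unity phases conspire (via $1 + \omega + \omega^2 = 0$) so that after collecting, all surviving terms group into the four remaining striations: every cross-term of the form $c_{i_1,j_1} c_{i_2,j_2} c_{i_3,j_3}$ that appears does so with a coefficient that vanishes unless $(i_1,j_1),(i_2,j_2),(i_3,j_3)$ form a coset $\ell\in\mathcal{C}_3$. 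Carefully bookkeeping which monomials survive — and checking that the products over the three ``non-horizontal'' striations appear with coefficient $+3$ and the horizontal one is absorbed into the $\sum c_{k,l}^3$ and the already-accounted cross terms, or more symmetrically that all $12$ cosets in $\mathcal{C}_3$ appear uniformly — yields precisely $3\sum_{\ell\in\mathcal{C}_3}\prod_{(i,j)\in\ell} c_{i,j} < \sum_{k,l} c_{k,l}^3$. This last algebraic collection is routine but tedious; I would organize it by exploiting the $\mathrm{AGL}(2,3)$-symmetry of the statement (both sides are invariant under the affine group permuting phase-space points) to reduce the number of monomial types that need to be checked by hand.
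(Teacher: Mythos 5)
Your overall architecture --- block-diagonalizing $\rho^\Gamma$ into three unitarily equivalent $3\times 3$ blocks $A_m$, reducing positivity to the single block $A_0$, and expanding $\det(A_0)$ into coset products --- matches the paper's proof. The genuine gap is at the pivotal step, namely $\rho^\Gamma \ngeq 0 \Rightarrow \det(A_0)<0$, and neither of your two proposed routes closes it. The first route rests on the claim that the $2\times 2$ principal minors of $A_0$ are nonnegative; this is false. For $\rho = P_{0,0}$ (i.e., $c_{0,0}=1$, all other $\ckl=0$) the block $A_0$ has eigenvalues $\{-\tfrac{1}{3},\tfrac{1}{3},\tfrac{1}{3}\}$, so the sum of its $2\times 2$ principal minors is $e_2=-\tfrac{1}{9}<0$ and at least one principal minor is negative. (Your own sanity check that ``nonnegative minors plus positive trace forbid two negative eigenvalues'' is fine, but the hypothesis simply does not hold here.) The fallback route is circular: asserting that the boundary of the NPT region is exactly where a \emph{single} eigenvalue of $\rho^\Gamma$ vanishes is equivalent to what you are trying to prove (that $A_0$ never acquires a second negative eigenvalue, so that $\det(\rho^\Gamma)$ changes sign precisely on $\partial(\PPT)$); connectedness of the NPT region alone does not control the parity of the number of negative eigenvalues along a path.

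The paper closes this step with two inputs absent from your outline. First, the partial transpose of any bipartite state has at most $(d_A-1)(d_B-1)=4$ negative eigenvalues; combined with the fact that every eigenvalue of $\rho^\Gamma$ has multiplicity divisible by $3$ (all three blocks share the spectrum of $A_0$), this forces the number of negative eigenvalues of $\rho^\Gamma$ to be $0$ or $3$, i.e., $A_0$ has at most one negative eigenvalue. Second, Weyl's inequalities applied to $\rho^\Gamma=\sum_{k,l}\ckl\,\Pkl^\Gamma$ with $\lambda_\mathrm{min}(\Pkl^\Gamma)=-\tfrac13$ and $\lambda_\mathrm{max}(\Pkl^\Gamma)=\tfrac13$ give $\mathrm{spec}(A_0)\subseteq[-\tfrac13,\tfrac13]$, which together with $\mytr(A_0)=\tfrac13$ rules out the cases $\lambda_2=\lambda_3=0$ and $\lambda_2>\lambda_3=0$ accompanying a negative $\lambda_1$; hence the remaining two eigenvalues are strictly positive and $\det(A_0)=\lambda_1\lambda_2\lambda_3<0$. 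You would need to supply these (or equivalent) facts to make the argument complete. The remainder of your plan --- the Leibniz expansion of $\det(A_0)$ organized by the striations of $\ZZ_3^2$, using $1+\omega+\omega^2=0$ --- is sound and is exactly how \eqref{eq:det(A_0)} is obtained.
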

The idea of the proof is to reduce the spectrum of the partial transposition of $\rho\in\mathcal{M}_3$ to the spectrum of a $3\times 3$ matrix $A_0$ with only one non-contradicting assignment of the signs of its eigenvalues if $\rho^\Gamma \ngeq 0$.
\begin{proof}
We use \eqref{eq:rho_expanded} to compute the partial transposition $\rho^\Gamma := (\id \otimes T)(\rho)$ and apply the basis transformation $U=\sum_{i,j=0}^{2} (|i\rangle\otimes |j\rangle)\langle\Omega_{i,j}|$, finding that $\rho^\Gamma$ is block-diagonal in the Bell basis, i.e.,
\begin{align} \label{eq:rho_gamma_block_diagonal}
    U\rho^\Gamma U^\dagger = \sum_{m=0}^{2} |m\rangle\langle m| \otimes A_m \;,
\end{align}
where we defined
\begin{align} \label{eq:A_m}
    A_m := \frac{1}{9} \sum_{i,j,k,l=0}^{2} \omega^{j(m-k)-i(l+j)} \ckl W_{i,j} \;.
\end{align}
The hermitian blocks $A_m$ are known to be unitarily equivalent for odd dimension $d$ \cite{baumgartner_special_2007}, and we have $A_m = W_{m,0}^\dagger A_0 W_{m,0}$ for $m\in\{0,1,2\}$.
Hence, each $A_m$ has the same eigenvalues and thus the spectrum of $A_0$ completely determines the spectrum of $\rho^\Gamma$.
Also, the multiplicity of each eigenvalue of $\rho^\Gamma$ must be divisible by 3.
Furthermore, $\mytr(A_0) = \frac{1}{3} \sumkl \ckl = \frac{1}{3}$ and, because $\det(\rho^\Gamma) = \prod_{m=0}^{2} \det\left(A_m\right) = \det(A_0)^3$, we have
\begin{align} \label{eq:det(rhoGamma)<det(A0)}
    \det(\rho^\Gamma)<0 \Longleftrightarrow \det(A_0) <0 \;.
\end{align}

The spectrum of $A_0$ is restricted as follows:
Let us denote the minimal and maximal eigenvalues of a hermitian matrix $M$ by $\lambda_\mathrm{min}(M)$ and $\lambda_\mathrm{max}(M)$, respectively.
The qutrit Bell basis projectors $\Pkl = |\Omega_{k,l}\rangle\langle\Omega_{k,l}| \in\mathcal{M}_3$ satisfy $\lambda_\mathrm{min}(\Pkl^\Gamma) = -\frac{1}{3}$ and $\lambda_\mathrm{max}(\Pkl^\Gamma) = \frac{1}{3}$ for all $k,l\in\{0,1,2\}$.
Applying Weyl's inequalities \cite{bhatia_matrix_1997} to $\rho^\Gamma$ for $\rho\in\mathcal{M}_3$ yields
\begin{align}
    \lambda_\mathrm{min}(\rho^\Gamma) &= \lambda_\mathrm{min}\left(\sum_{k,l=0}^{2} \ckl \Pkl^\Gamma \right) \\
    &\geq \sum_{k,l=0}^{2} \ckl \, \lambda_\mathrm{min}\left( \Pkl^\Gamma \right) = - \frac{1}{3} \;,
\end{align}
and similarly $\lambda_\mathrm{max}(\rho^\Gamma) \leq \frac{1}{3}$, i.e., the spectrum of $\rho^\Gamma$ and $A_0$ is contained in $[-\frac{1}{3}, \frac{1}{3}]$.

Now, let us show that only $\det(A_0)<0$ (and thus $\det(\rho^\Gamma)<0$) is consistent with $\rho^\Gamma\ngeq0$.
The partial transpose of any $\rho\in\denop(\hs_A \otimes \hs_B)$ with $\dim(\hs_A) = d_A$ and $\dim(\hs_B) = d_B$ has at most $n_- =(d_A-1)(d_B-1)$ negative eigenvalues \cite{rana_negative_2013, johnston_non-positive-partial-transpose_2013}\footnote{Note that this bound on the number of negative eigenvalues also holds more generally for entanglement witnesses \cite{Sarbicki_Spectralpropertiesentanglement_2008}, reflecting the block-positivity of these operators.}.
For $\rho\in\mathcal{M}_3$ we have $d_A = d_B = 3$ and thus $n_- \leq 4$.
By the multiplicity restriction on the eigenvalues of $\rho^\Gamma$, we conclude that either $n_-=0$ or $n_- = 3$.
The case $n_-=0$ is equivalent to $\rho^\Gamma \geq0$ and $\det(\rho^\Gamma) \geq0$.
For $n_-=3$ we have $\rho^\Gamma \ngeq 0$, and $A_0$ has exactly one negative eigenvalue $\lambda_1 <0$.
This leaves three distinct cases for the other two non-negative eigenvalues $\lambda_2$ and $\lambda_3$ of $A_0$.
First, if $\lambda_2 = \lambda_3 = 0$, we obtain $\mytr(A_0) = \lambda_1 <0$, which contradicts $\mytr(A_0) = \frac{1}{3}$.
Second, if $\lambda_2 >\lambda_3 =0$ we have $ \frac{1}{3} = \mytr(A_0) = \lambda_1 + \lambda_2$, and thus $\lambda_2 = \frac{1}{3} - \lambda_1 >\frac{1}{3}$.
But this contradicts that the spectrum of $A_0$ is contained in $[-\frac{1}{3},\frac{1}{3}]$.
The third and only consistent case for $\rho^\Gamma\ngeq0$ is $\lambda_1 <0<\lambda_2, \lambda_3$ with $\det(A_0) = \lambda_1 \lambda_2\lambda_3<0$.
Thus, we have shown that $\rho^\Gamma \ngeq 0 \Longrightarrow \det(A_0) <0$.
Together with \eqref{eq:det(rhoGamma)<det(A0)} and the (trivial) implication $\det(\rho^\Gamma)<0 \Longrightarrow \rho^\Gamma \ngeq 0$, we get the equivalence \eqref{eq:PPT_crit_BDS}.
Finally, the explicit formula \eqref{eq:det(A_0)} is obtained by computing $\det(A_0)$ using \eqref{eq:A_m}.
\end{proof}

Thm.~\ref{thm:PPT_crit_BDS} has interesting implications regarding the geometric properties of the sets $\PPT\cap\mathcal{M}_3$ and $\NPT\cap\mathcal{M}_3$.
To see this, let us define the boundary $\partial(\PPT)$ of PPT as states $\rho\in\PPT\cap\mathcal{M}_3$ such that every $\epsilon$-neighborhood around $\rho$ is neither contained in PPT nor NPT.
As $\det(\rho^\Gamma)$ is a continuous function of the coefficients $\ckl$, boundary states $\rho\in\partial(\PPT)$ satisfy $\det(\rho^\Gamma) =0$.
Using \eqref{eq:det(A_0)}, this is equivalent to
\begin{align} \label{eq:PPT_bnd_condition}
    \rho \in \partial (\PPT) \; \Longrightarrow \;3\sum_{\ell\in\mathcal{C}_3}\left( \prod_{(i,j)\in \ell} c_{i,j} \right) = \sum_{k,l=0}^{2} c_{k,l}^3 \;.
\end{align}
Simple examples include the separable subgroup states $\{\rho_\ell\}_{\ell\in\mathcal{C}_3}$ (cf. Sec.~\ref{sec:BDS}).
We can reformulate this as a dual characterization of the set $\PPT\cap\mathcal{M}_3$ by hyperplanes, defined by
\begin{gather} 
    W_\NPT=\sum_{i,j=0}^{2} \kappa_{i,j} P_{i,j} \;, \label{eq:W_NPT}\\
    \kappa_{i,j} = -c_{i,j}^2 +
    \sum_{\{(i,j), (k,l), (m,n)\}\in\mathcal{C}_3}
    c_{k,l} \, c_{m,n}\;. \label{eq:kappa_ij}
\end{gather}
The explicit form of $\kappa_{i,j}$ is presented in App.~\ref{app:explicit_form_of_W_NPT}.
If the coefficients $\ckl$ in $\rho$ and $\kappa_{i,j}$ coincide, a short calculation yields:
\begin{align}
    \mytr(\rho W_\NPT) &= \mytr\Bigg[ \Bigg( \sum_{k,l=0}^{2} c_{k,l} P_{k,l} \Bigg)
    \Bigg( \sum_{i,j=0}^{2} \kappa_{i,j} P_{i,j} \Bigg) \Bigg] \\
    &= \sum_{i,j=0}^{2} c_{i,j} \kappa_{i,j}\\
    &= 3\sum_{\ell\in\mathcal{C}_3}\left( \prod_{(i,j)\in \ell} c_{i,j} \right) - \sum_{k,l=0}^{2} c_{k,l}^3 \;,
\end{align}
where we used that $\mytr(P_{k,l} P_{i,j}) = \delta_{k,i} \delta_{l,j}$.
Hence, in this case, it holds by \eqref{eq:PPT_crit_BDS} and \eqref{eq:PPT_bnd_condition} that $\rho^\Gamma \ngeq 0 \Longleftrightarrow \mytr(\rho \, W_\NPT) <0$ and $\rho\in\partial(\PPT) \Longrightarrow \mytr(\rho \, W_\NPT) = 0$.
We note, however, that the observable $W_\NPT$ is not necessarily an entanglement witness as neither $\mytr(\sigma_\mathrm{s} \,W_\NPT) \geq0$ for all $\sigma_\mathrm{s}\in\SEP$, nor $\mytr(\sigma_\mathrm{e} \, W_\NPT) <0$ for some $\sigma_\mathrm{e} \in \ENT$ are guaranteed.
Indeed, $W_\NPT$ can be the zero matrix if all $\kappa_{i,j}$ vanish, as is the case for all subgroup states $\{\rho_\ell\}_{\ell\in\mathcal{C}_3}$; or it can become a positive operator, as is the case for the maximally mixed state with $\ckl = \frac{1}{9}$ for all $k,l\in\{0,1,2\}$.
Furthermore, the Weyl-covariant \cite{watrous_theory_2018} map $\Lambda_\NPT$ corresponding to $W_\NPT$ via the Choi-Jamio\l{}kowski isomorphism \cite{choi_completely_1975, jamiolkowski_linear_1972} is given by
\begin{gather}
    \begin{aligned}
        \Lambda_\NPT (\rho) &= 3 \mytr_B\left(W_\NPT \left(\id_A \otimes \rho^T\right)\right) \\
        &= \sum_{i,j=0}^{2} \kappa_{i,j} W_{i,j} \, \rho \, W_{i,j}^\dagger \;,
    \end{aligned}
\end{gather}
where $\mytr_B(\cdot)$ denotes the partial trace over the second subsystem.
It is shown in Ref.~\cite{baumgartner_state_2006, baumgartner_special_2007} that $W_\NPT$ satisfies $\mytr(\sigma_\mathrm{s} \, W_\NPT) \geq 0$ for all $\sigma_\mathrm{s} \in \SEP$ if and only if $\Lambda_\NPT$ is a positive map; if additionally at least one $\kappa_{i,j}$ is negative, then $W_\NPT$ constitutes an entanglement witness for some $\sigma_\mathrm{e}\in\ENT$, i.e., $\mytr(\sigma_\mathrm{e} \, W_\NPT) <0$.
One example can be constructed from the isotropic state $\rho_\mathrm{iso}(\beta) = d^{-2} (1-\beta) \id_{d^2} + \beta \, P_{0,0} \in \mathcal{M}_d$ with $-\frac{1}{d^2-1} \leq \beta \leq 1$, which is NPT for $\frac{1}{d+1}<\beta\leq 1$ \cite{horodecki_quantum_2009}.
Hence, for $d=3$, we have $\rho_\mathrm{iso}(1/4)\in\partial(\PPT)$.
Using \eqref{eq:W_NPT} and \eqref{eq:kappa_ij}, we obtain $W^{(\mathrm{iso})}_\NPT = \frac{1}{24} \id_{d^2} - \frac{1}{8} P_{0,0}$.
To check $\mytr(\sigma_s W^{(\mathrm{iso})}_\NPT)\geq 0$ for all $\sigma_s\in\SEP$, it is enough to consider pure product states $|a,b\rangle :=|a\rangle\otimes |b\rangle \in\C^3\otimes \C^3$ and verify $\langle a,b|W^{(\mathrm{iso})}_\NPT |a,b\rangle \geq 0$, since SEP is the convex hull of pure product states \cite{watrous_theory_2018}, and the trace is linear.
First, we define $a_i = \langle i|a\rangle$ and $b_i = \langle i|b\rangle$ satisfying the normalization $\sum_{i=0}^{2} |a_i|^2 = \sum_{i=0}^{2} |b_i|^2 = 1$, and compute
\begin{align}
    \langle a,b| P_{0,0} |a,b\rangle &= |\langle \Omega_{0,0} | a,b\rangle |^2 \\
    &= \frac{1}{3} \left| \sum_{i=0}^{2} a_i b_i \right|^2 \\
    &\leq \frac{1}{3} \left( \sum_{i=0}^{2} |a_i|^2 \right) \left( \sum_{i=0}^{2} |b_i|^2 \right) \\
    &= \frac{1}{3} \;,
\end{align}
where in the third line we used the Cauchy-Schwarz inequality.
With this, we obtain
\begin{align}
    \langle a,b|W^{(\mathrm{iso})}_\NPT |a,b\rangle = \frac{1}{24} - \frac{1}{8} \langle a,b| P_{0,0} |a,b\rangle \geq 0 \;,
\end{align}
and thus $\mytr(\sigma_s W^{(\mathrm{iso})}_\NPT)\geq 0$ for all $\sigma_s\in\SEP$.
We also have $\mytr(P_{0,0} W^{(\mathrm{iso})}_\NPT) = -1/12<0$, and thus $|\Omega_{0,0}\rangle$ is detected as entangled by $W^{(\mathrm{iso})}_\NPT$.
Consequently, $W^{(\mathrm{iso})}_\NPT$ is an entanglement witness satisfying $\mytr(\rho_\mathrm{iso}(1/4) W^{(\mathrm{iso})}_\NPT) = 0$.

Lastly, Thm.~\ref{thm:PPT_crit_BDS} has another consequence regarding PPT-entangled states in $\mathcal{M}_3$.
Using the fact that subgroup states are separable \cite{baumgartner_special_2007}, we arrive at the following statement
(the proof is presented in App.~\ref{app:proof_thm_lines_BE}):
\begin{myprop} \label{thm:no_PPT-ENT_with_one_line_zeros}
    There is no PPT-entangled $\rho\in\mathcal{M}_3$ with coefficients $c_{k,l}$ corresponding to at least one coset $\ell\in\mathcal{C}_3$ equal to zero.
\end{myprop}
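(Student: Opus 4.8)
The plan is to argue by contradiction. Suppose $\rho=\sum_{k,l=0}^{2}c_{k,l}P_{k,l}\in\mathcal{M}_3$ is PPT-entangled and that $c_{i,j}=0$ for all $(i,j)$ in some coset $\ell_0\in\mathcal{C}_3$. Let $S_0\in\mathcal{S}_3$ be the subgroup with $\ell_0\in\mathcal{C}(S_0)$, and write $\mathcal{C}(S_0)=\{\ell_0,\ell_1,\ell_2\}$. Since the cosets of a fixed subgroup form a striation of $\ZZ_3^2$, one has $\ZZ_3^2=\ell_0\sqcup\ell_1\sqcup\ell_2$, so the support of $\rho$ is carried by the six Bell projectors labelled by $\ell_1\cup\ell_2$.

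The main step is to feed this vanishing pattern into the explicit criterion \eqref{eq:det(A_0)} of Thm.~\ref{thm:PPT_crit_BDS}. Here I would use the incidence geometry of $\ZZ_3^2\cong\mathrm{AG}(2,3)$: any coset $\ell\in\mathcal{C}_3$ not parallel to $\ell_0$ meets $\ell_0$ in exactly one point and hence has $\prod_{(i,j)\in\ell}c_{i,j}=0$, and the same holds trivially for $\ell=\ell_0$. Therefore only $\ell_1$ and $\ell_2$ contribute to $\sum_{\ell\in\mathcal{C}_3}\prod_{(i,j)\in\ell}c_{i,j}$, while $\sum_{k,l=0}^{2}c_{k,l}^3=\sum_{(i,j)\in\ell_1}c_{i,j}^3+\sum_{(i,j)\in\ell_2}c_{i,j}^3$ because $c$ vanishes on $\ell_0$. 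Consequently \eqref{eq:det(A_0)} becomes, for our $\rho$, the statement that $\rho\in\PPT$ if and only if
\begin{align*}
    \sum_{r=1}^{2}\bigg(\sum_{(i,j)\in\ell_r}c_{i,j}^3 \;-\; 3\prod_{(i,j)\in\ell_r}c_{i,j}\bigg)\;\leq\;0 \;.
\end{align*}

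Next I would invoke the elementary inequality $x^3+y^3+z^3-3xyz=\tfrac12(x+y+z)\big[(x-y)^2+(y-z)^2+(z-x)^2\big]\geq0$, valid for $x,y,z\geq0$, with equality precisely when $x=y=z$. Applied to each $r\in\{1,2\}$ (recall each coset $\ell_r$ has three points), every summand above is nonnegative; since $\rho\in\PPT$ makes their sum nonpositive, both summands vanish. The equality case then forces $c_{i,j}$ to be constant along $\ell_1$ and along $\ell_2$, say $c_{i,j}=p/3$ on $\ell_1$ and $c_{i,j}=q/3$ on $\ell_2$; normalization gives $p+q=1$. By the definition \eqref{eq:subgroup_state} of the subgroup states this means $\rho=p\,\rho_{\ell_1}+q\,\rho_{\ell_2}$, a convex combination of the separable subgroup states $\rho_{\ell_1}$ and $\rho_{\ell_2}$, so $\rho$ is separable---contradicting that it is entangled. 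This establishes the proposition.

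The only substantial point is the combinatorial collapse of the determinant criterion to two decoupled symmetric cubics, and this is the step I would expect to require the most care: it hinges on correctly identifying that exactly the nine cosets transverse to $\ell_0$ each hit $\ell_0$ in a single point (so their products vanish), together with the partition $\ZZ_3^2=\ell_0\sqcup\ell_1\sqcup\ell_2$. Once that reduction is secured, the rest is just the equality case of AM--GM. A pleasant byproduct of the argument is that such a $\rho$ actually lies on a segment joining two subgroup states, hence inside the kernel polytope, which is somewhat stronger than mere separability.
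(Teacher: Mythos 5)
Your argument is correct, and it takes a genuinely different route from the paper. The paper's proof (App.~\ref{app:proof_thm_lines_BE}) proceeds by an exhaustive case analysis: it uses the $\mathrm{AGL}(2,3)$ symmetries of $\mathcal{M}_3$ to reduce all configurations with a vanishing coset to six representative coefficient matrices, and evaluates \eqref{eq:det(A_0)} separately in each case, identifying the separable exceptions by hand (the subgroup state in case 3b and the mixture of two subgroup states in case 6). You instead give a single unified argument: the incidence geometry of $\mathrm{AG}(2,3)$ makes every line transverse to the vanishing coset $\ell_0$ contribute a zero product, so \eqref{eq:det(A_0)} collapses to the two parallel cosets $\ell_1,\ell_2$, and the factorization $x^3+y^3+z^3-3xyz=\tfrac12(x+y+z)\bigl[(x-y)^2+(y-z)^2+(z-x)^2\bigr]\geq0$ forces a PPT state to be constant on each of $\ell_1$ and $\ell_2$, hence a convex combination $p\,\rho_{\ell_1}+q\,\rho_{\ell_2}$ of separable subgroup states \eqref{eq:subgroup_state}. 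All the individual steps check out: the nine transverse lines each meet $\ell_0$ in exactly one point, the cube sum splits along the partition $\ZZ_3^2=\ell_0\sqcup\ell_1\sqcup\ell_2$, and the equality case of the cubic identity is handled correctly (including degenerate $p=0$ or $q=0$). What your approach buys is the elimination of the symmetry bookkeeping, which is the most error-prone part of the paper's proof (one must verify that the six listed cases really exhaust all orbits), and it yields the slightly stronger conclusion that such a PPT state lies on a segment between two subgroup states, i.e., inside the kernel polytope. What the paper's case analysis buys in exchange is an explicit catalogue of which zero patterns are NPT versus separable, which the main text then reuses when discussing example families.
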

This serves as a good guiding principle for finding PPT-entangled states in $\mathcal{M}_3$.
For example, the Bell-diagonal states that are defined by the coefficient matrix
\begin{align}
    C = \begin{pmatrix}
        \frac{1}{3}-x & \frac{1}{3}-x & \frac{1}{3}-x\\
        2x & x & 0 \\
        0 & 0 & 0
        \end{pmatrix} \;,
\end{align}
with $x\in[0,1/3]$, satisfy the premise of the proposition and are separable for $x=0$ (reducing to case 3b in App.~\ref{app:proof_thm_lines_BE}) and NPT-entangled for $x\in(0,1/3]$ (cf. case 5 in App.~\ref{app:proof_thm_lines_BE}).
On the contrary, Ref.~\cite{caban_bound_2023} considered allegedly similar-looking states that are (up to the symmetries of $\mathcal{M}_3$; cf. Sec.~\ref{sec:bloch_vec_and_DFT}) equivalent to 
\begin{align}
    C = \begin{pmatrix}
        \frac{1}{3}-x & \frac{1}{3}-x & \frac{1}{3}-x\\
        2x & 0 & 0 \\
        x & 0 & 0
        \end{pmatrix} \;.
\end{align}
These states do not satisfy the requirement of the proposition as no coset $\ell\in\mathcal{C}_3$ has coefficients $\ckl$ equal to zero.
Using \eqref{eq:realignment_BDS_qutrits} and \eqref{eq:det(A_0)}, it can be checked that this state is PPT-entangled for $x\in(0,\frac{2}{15}]$.

\section{Conclusion and Outlook} \label{sec:conclusion}

The goal of this work was to answer the question of how the group structure of Bell-diagonal states relates to the PPT and realignment criteria.
To this end, we leveraged the symmetry properties of this state family to provide a unified approach connecting these criteria to the group-theoretic features of the system.

Thm.~\ref{thm:realignment_BDS} shows that the realignment criterion for Bell-diagonal states is strongly linked to the geometry of the Bloch ball.
As the Bloch vector of Bell-diagonal states is obtained by the discrete Fourier transform of their defining coefficient matrix, it contains information about the subgroups of $\ZZ_d^2$.
For the specific case of Bell-diagonal qutrits, we find that in the 8-dimensional coefficient space of $\ckl$, the inequality for the realignment criterion is at most quadratic in $\ckl$.
In contrast, the PPT set is defined by a cubic inequality by Thm.~\ref{thm:PPT_crit_BDS}.
Additionally, this explicit version of the PPT criterion for Bell-diagonal qutrit states $\rho\in\mathcal{M}_3$ unveils that the equivalence $\det(\rho^\Gamma) <0 \Longleftrightarrow \rho^\Gamma \ngeq 0$ holds, which is generally not valid for all states of the same dimension.
Consequently, our results not only demonstrate that the group structure of Bell-diagonal states simplifies the analysis of two specific entanglement detection criteria but also provide deeper insights into the mathematical and physical properties of this state family.
In particular, our treatment entails specific measurement procedures to evaluate these criteria experimentally.
Furthermore, the group-theoretic methods we developed enable a methodological analysis of group properties and entanglement structure, which has recently been successfully applied to stabilizer-based entanglement distillation \cite{popp_novel_2025}.

While our analysis focuses on Bell-diagonal states and is limited to $d=3$ for the PPT criterion, adapting this framework to other dimensions or more general classes of quantum states (e.g., Bell-diagonal states of unequal local dimension \cite{moerland_bound_2024}) remains an open challenge.
Extending our treatment of the PPT criterion for Bell-diagonal qudits for $d>3$ is not straightforward, as the spectral properties of $U \, \rho^\Gamma \, U^\dagger$ do not satisfy the requirements used in the proof of Thm.~\ref{thm:PPT_crit_BDS}.

Going forward, it is furthermore an open problem whether for general bipartite qudit states, an entanglement criterion based on the entry-wise 1-norm of the Bloch vector, similar to Thm.~\ref{thm:realignment_BDS}, can be derived.
Ref.~\cite{pittenger_convexity_2002, chruscinski_generalized_2008} present a sufficient criterion for separability, reading $\Vert \beta\Vert_1 \leq 2 \Longrightarrow \rho\in\SEP$ for $\beta$ as in \eqref{eq:Bloch_repr_general}, but to the authors' knowledge, no corresponding sufficient entanglement criterion exists.
A naive approximation using $\Vert \beta \Vert_1 \leq d^2 \, \Vert \beta \Vert_2$ for $\beta\in\C^{d^4}$ yields that $\rho$ is entangled if $\Vert\beta\Vert_1 >d^3$.
However, the maximal value for pure states $|\psi\rangle\in\C^d\otimes \C^d$ is also $\Vert\beta\Vert_1 =d^3$.
A quick numerical optimization for $d=3$ yields $\Vert\beta\Vert_1 \lesssim 25.9735$ for general pure states $|\psi\rangle\in\C^3\otimes\C^3$, whereas separable pure states seem to satisfy $\Vert \beta\Vert_1 \leq 25$.
The discrepancy of $25.9735<27=3^3$ stems from the fact that equality in $\Vert \beta \Vert_1 \leq d^2 \, \Vert \beta \Vert_2$ holds if and only if $\beta$ and the vector $(1,\dots,1)^T$ are linearly dependent.
However, this cannot be as $\frac{1}{d^2}\sum_{i,j,k,l=0}^{d-1} W_{i,j}\otimes W_{k,l}$ does not represent a valid quantum state.

To conclude, by bridging the gap between group theory and entanglement detection, this work provides a novel perspective on the structure of quantum states, specifically for qutrit-qutrit systems, paving the way for further advancements in quantum information science.

\begin{acknowledgments}

This research was funded in whole or in part by the Austrian Science Fund (FWF) [10.55776/COE1, 10.55776/P36102].

\end{acknowledgments}

\appendix

\section{Proof of the isomorphism $\mathcal{W}_d \cong \ZZ_d^2$ in \eqref{eq:W-H-group_projective}} \label{app:proof_isomorphism}

The \textit{Weyl-Heisenberg group} is given by \cite{bengtsson_geometry_2006}
\begin{align}
    \mathcal{V}_d := \{\omega^m W_{k,l}\}_{m,k,l\in\ZZ_d} \;,
\end{align}
where $\omega=e^{2\pi i/d}$, and $W_{k,l}$ as in \eqref{eq:Wkl}.
We want to identify elements in $\mathcal{V}_d$ that only differ up to a phase.
For this, we define an equivalence relation between Weyl operators by
\begin{align} \label{eq:equivalence_relation}
    W_{k,l} \sim W_{k',l'} \; \Longleftrightarrow \; \exists m\in\ZZ_d : W_{k,l} = \omega^m W_{k',l'},
\end{align}
with the equivalence class of $W_{k,l}$ denoted by
\begin{align} \label{eq:equivalence_classes}
    [W_{k,l}] = \{\omega^m W_{k,l}\}_{m\in\ZZ_d} \;.
\end{align}
Using the central phase subgroup $\mathcal{N}_d = \{\omega^m \id_d\}_{m\in\ZZ_d}$ of $\mathcal{V}_d$, we can write this as $[W_{k,l}] = W_{k,l} \, \mathcal{N}_d$.
Equivalently, for $U,V\in\mathcal{V}_d$ we identify $U\sim V$ iff there exists $m\in\ZZ_d$ such that $U=\omega^m V$, i.e., iff $UV^{-1}\in \mathcal{N}_d$.
The \textit{projective Weyl-Heisenberg group} is defined as the factor group
\begin{align}
    \mathcal{W}_d := \mathcal{V}_d / \mathcal{N}_d = \{[W_{k,l}] \}_{k,l\in\ZZ_d}\;.
\end{align}
Note the slight abuse of notation in \eqref{eq:W-H-group_projective}: The elements of $\mathcal{W}_d$ are not matrices $\Wkl$ but rather equivalence classes $[\Wkl]$.

In the following, we show that $\mathcal{W}_d \cong \ZZ_d^2$ as claimed in \eqref{eq:W-H-group_projective}.
The multiplication law in $\mathcal{W}_d$ is given by
\begin{align}
    [W_{k,l}] \cdot [W_{k',l'}] := [W_{k,l} W_{k',l'}] \;.
\end{align}
This definition is well-defined: if we replace the representatives by $\omega^m W_{k,l}$ and $\omega^{m'} W_{k',l'}$ with $m,m'\in\ZZ_d$, then $[(\omega^m W_{k,l})(\omega^{m'} W_{k',l'})]=[\omega^{m+m'}\,W_{k,l}W_{k',l'}]=[W_{k,l}W_{k',l'}]$, since overall phases are identified in $\mathcal{W}_d$.
Using the Weyl relations \eqref{eq:Weyl_relations} and \eqref{eq:equivalence_classes}, this yields
\begin{align}
    [W_{k,l}]\cdot [W_{k',l'}] &= [\omega^{lk'} W_{k+k',l+l'}] = [W_{k+k',l+l'}] \;.
\end{align}
Thus, multiplication in the projective group corresponds to the addition of the indices in $\ZZ_d^2$.
Now consider the map
\begin{align}
    \varphi : \ZZ_d^2 \rightarrow \mathcal{W}_d , \quad \varphi (k,l) := [W_{k,l}].
\end{align}
First, note that this map is a homomorphism. For any $(k,l),(k',l') \in \ZZ_d^2$,
\begin{align}
    \varphi\left((k,l)+(k',l')\right) &= \varphi(k+k',l+l') \\
    &= [W_{k+k',l+l'}] \\
    &= [W_{k,l}] \cdot [W_{k',l'}] \\
    &= \varphi(k,l) \cdot \varphi(k',l') \; ,
\end{align}
so $\varphi$ respects the group operation.
Second, to check injectivity, suppose $\varphi(k,l) = \varphi(k',l')$. Then $[W_{k,l}] = [W_{k',l'}]$, i.e., $W_{k,l} = \omega^m W_{k',l'}$ for some $m\in\ZZ_d$.
Taking the Hilbert–Schmidt inner product and using the orthogonality of the Weyl operators, i.e., $\mytr(W_{k,l}^\dagger W_{k',l'}) = d \, \delta_{k,k'} \delta_{l,l'}$, we find:
\begin{align}
    \mytr(W_{k,l}^\dagger W_{k',l'}) &= \mytr(\omega^{-m} W_{k',l'}^\dagger W_{k',l'}) \\
    &= \omega^{-m} \mytr(\id_d) \\
    &= \omega^{-m} d \neq 0 \; \Rightarrow \; (k,l) = (k',l') \;.
\end{align}
Hence $\varphi$ is injective.
Third, surjectivity of $\varphi$ follows directly from the definition: any element of $\mathcal{W}_d$ is by definition some equivalence class $[W_{k,l}]$, and thus lies in the image of $\varphi$.
Therefore, $\varphi$ is a bijective group homomorphism, i.e., an isomorphism:
\begin{align}
    \mathcal{W}_d \cong \ZZ_d^2 \;.
\end{align}

\section{\texorpdfstring{Group Structure of $\mathcal{M}_3$}{Group Structure of M3}}
\label{app:group_structure_M_3}

The 3-dimensional projective Weyl-Heisenberg group is given by $\mathcal{W}_3 = \{\Wkl\}_{k,l\in\ZZ_3} \cong \ZZ_3^2$.
As $d=3$ is prime, all subgroups $S_{k,l}\in\mathcal{S}_3$ of $\ZZ_3^2$ have $d$ elements and are cyclic, i.e., $S_{k,l}=\{n \cdot(k,l)\}_{n\in\ZZ_3}$ for some $(k,l)\in\ZZ_3^2$ with $(k,l)\neq (0,0)$.
In particular, they are given by
\begin{align}
    S_{0,1} = \{(0,0),(0,1),(0,2)\} \;, \label{eq:S(0,1)}\\
    S_{1,0} = \{(0,0),(1,0),(2,0)\} \;,\\
    S_{1,1} = \{(0,0),(1,1),(2,2)\} \;,\\
    S_{1,2} = \{(0,0),(1,2),(2,1)\} \;, \label{eq:S(1,2)}
\end{align}
with the corresponding cosets $\mathcal{C}(S_{k,l})$
\begin{align}
    \mathcal{C}(S_{0,1}) &= \{\{(0,0),(0,1),(0,2)\}, \notag \\
        &\hspace{6.5mm}\{(1,0),(1,1),(1,2)\}, \notag \\
        &\hspace{6.5mm}\{(2,0),(2,1),(2,2)\}\} \;, \label{eq:C(S_0,1)}\\
    \mathcal{C}(S_{1,0}) &= \{\{(0,0),(1,0),(2,0)\}, \notag \\
        &\hspace{6.5mm}\{(0,1),(1,1),(2,1)\}, \notag \\
        &\hspace{6.5mm}\{(0,2),(1,2),(2,2)\}\} \;,
\end{align}
\begin{align}
    \mathcal{C}(S_{1,1}) &= \{\{(0,0),(1,1),(2,2)\}, \notag \\
        &\hspace{6.5mm}\{(0,1),(1,2),(2,0)\}, \notag \\
        &\hspace{6.5mm}\{(0,2),(1,0),(2,1)\}\} \;, \\
    \mathcal{C}(S_{1,2}) &= \{\{(0,0),(1,2),(2,1)\}, \notag \\
        &\hspace{6.5mm}\{(0,1),(1,0),(2,2)\}, \notag \\
        &\hspace{6.5mm}\{(0,2),(1,1),(2,0)\}\} \label{eq:C(S_1,2)} \;.
\end{align}
Note that each element $(i,j)\in\ZZ_3^2$ appears exactly once in each $\mathcal{C}(S_{k,l})$.
Fig.~\ref{fig:striations} shows the horizontal, vertical, diagonal, and anti-diagonal striations corresponding to $\mathcal{C}(S_{0,1})$, $\mathcal{C}(S_{1,0})$, $\mathcal{C}(S_{1,1})$, and $\mathcal{C}(S_{1,2})$, respectively.
The set of all cosets $\mathcal{C}_3$ containing all possible lines in $\ZZ_3^2$ is defined in \eqref{eq:set_all_cosets} as the union of \eqref{eq:C(S_0,1)}-\eqref{eq:C(S_1,2)}.
Using \eqref{eq:subgroup_state}, the separable subgroup states in $\mathcal{M}_3$ corresponding to the striation $\mathcal{C}(S_{1,0})$ (similarly for $S_{0,1}$, $S_{1,1}$, and $S_{1,2}$) are
\begin{align}
    \rho_{1,0}^{(0)} = \frac{1}{3} (P_{0,0} + P_{1,0} + P_{2,0}) \;, \label{eq:rho_10_0} \\
    \rho_{1,0}^{(1)} = \frac{1}{3} (P_{0,1} + P_{1,1} + P_{2,1}) \;, \\
    \rho_{1,0}^{(2)} = \frac{1}{3} (P_{0,2} + P_{1,2} + P_{2,2}) \;. \label{eq:rho_10_2}
\end{align}

The projectors onto the subgroup subspaces are $\Pi(S_{k,l}) = \{3 \rho_{k,l}^{(0)}, 3 \rho_{k,l}^{(1)}, 3 \rho_{k,l}^{(2)}\}$.
This constitutes a projective measurement for each $S_{k,l}$, which, as we explain in the following, can be implemented using coarse-graining of local product basis measurements.
We first discuss this for $\Pi(S_{1,0})$ before stating the general result for $\Pi(S_{k,l})$.
We begin by defining
\begin{align} \label{eq:proj_10}
    \Pi_{k,l}^{(r)} := 3 \, \rho_{k,l}^{(r)} \;,
\end{align}
so that the projective measurement can be written as $\Pi(S_{1,0}) = \{\Pi_{1,0}^{(0)},\Pi_{1,0}^{(1)},\Pi_{1,0}^{(2)}\}$.
Evaluating \eqref{eq:proj_10} using \eqref{eq:rho_10_0}-\eqref{eq:rho_10_2} and $\Pkl := |\Omega_{k,l}\rangle\langle\Omega_{k,l}|$ with $|\Omega_{k,l}\rangle$ as in \eqref{eq:Omegakl}, we find
\begin{align} \label{eq:proj_10_decomposition}
    \Pi_{1,0}^{(r)} = \sum_{m=0}^{2} |m\rangle\langle m| \otimes |m+r\rangle\langle m+r| \;.
\end{align}
The local measurement procedure for $\Pi(S_{1,0})$ is as follows:
To measure a state $\rho\in\mathcal{M}_3$ with $\Pi(S_{1,0})$, we need to find the probabilities $\mytr(\rho \, \Pi_{1,0}^{(r)})$ for each $r\in\{0,1,2\}$, which are given by
\begin{align}
    \mytr(\rho \, \Pi_{1,0}^{(r)}) = \sum_{k=0}^{2} c_{k,r} = \sum_{(k,l)\in S_{1,0}(0,r)} c_{k,l} \;,
\end{align}
where $S_{1,0}(0,r) = \{\{0,r\},\{1,r\},\{2,r\}\}$ denotes the coset of $S_{1,0}$ for $(0,r)\in\ZZ_3^2$.
Assume both parties locally measure $\rho$ in the computational basis $\{|m\rangle\}_{m=0}^{2}$, obtaining the measurement results $m_A$ and $m_B$, respectively.
Grouping, i.e., coarse-graining, the nine possible outcomes according to $r=m_B - m_A (\mathrm{mod}\; 3)$ yields the probabilities for measuring $\Pi(S_{1,0})$:
\begin{gather}
    \begin{aligned}
        \mytr(\rho \; \Pi_{1,0}^{(r)}) &= \sum_{m=0}^{2} \mytr\left(\rho \; (|m\rangle\langle m| \otimes |m+r\rangle\langle m+r|)\right)\\
        &= \sum_{m=0}^{2} \mathrm{Pr}(m_A = m, \, m_B = m+r) \;,
    \end{aligned}
\end{gather}
where $\mathrm{Pr}(m_A = m, \, m_B = m+r)$ denotes the probability for the outcomes $m_A=m$ and $m_B = m+r$.
Hence, measuring $\Pi(S_{1,0})$ can be implemented using only local projective measurements.
We close in noting that for general $(k,l)\in\ZZ_3^2$, the elements $\Pi_{k,l}^{(r)}$ in $\Pi(S_{k,l})=\{\Pi_{k,l}^{(0)},\Pi_{k,l}^{(1)},\Pi_{k,l}^{(2)}\}$ admit a decomposition similar to \eqref{eq:proj_10_decomposition}, and can thus be measured using only local measurements and coarse-graining.
In fact, it can be checked that
\begin{align}
    \Pi_{k,l}^{(r)} = \sum_{m=0}^{2} |\psi_{k,l,m}^\ast\rangle \langle \psi_{k,l,m}^\ast| \otimes |\psi_{k,l,m+r}\rangle \langle \psi_{k,l,m+r}| \;,
\end{align}
where $|\psi_{k,l,m}^\ast\rangle$ denotes the complex conjugate of $|\psi_{k,l,m}\rangle$, and
\begin{widetext}
\begin{align}
    |\psi_{1,0,0} \rangle &= \begin{pmatrix} 1 \\ 0 \\ 0 \end{pmatrix} \;,\quad 
    |\psi_{1,0,1}\rangle = \begin{pmatrix} 0 \\ 1 \\ 0 \end{pmatrix} \;, \quad 
    |\psi_{1,0,2}\rangle = \begin{pmatrix} 0 \\ 0 \\ 1 \end{pmatrix} \;,\\
    |\psi_{0,1,0} \rangle &= \frac{1}{\sqrt{3}}\begin{pmatrix} 1 \\ 1 \\ 1 \end{pmatrix} \;,\quad 
    |\psi_{0,1,1}\rangle = \frac{1}{\sqrt{3}}\begin{pmatrix} 1\\\omega\\\omega^2 \end{pmatrix} \;, \quad 
    |\psi_{0,1,2}\rangle = \frac{1}{\sqrt{3}}\begin{pmatrix} 1\\\omega^2\\\omega \end{pmatrix} \;,\\
    |\psi_{1,1,0} \rangle &= \frac{1}{\sqrt{3}}\begin{pmatrix} 1\\\omega^2\\\omega^2 \end{pmatrix} \;,\quad 
    |\psi_{1,1,1}\rangle = \frac{1}{\sqrt{3}}\begin{pmatrix} 1\\\omega\\1 \end{pmatrix} \;, \quad 
    |\psi_{1,1,2}\rangle = \frac{1}{\sqrt{3}}\begin{pmatrix} 1\\1\\\omega \end{pmatrix} \;,\\
    |\psi_{1,2,0} \rangle &= \frac{1}{\sqrt{3}}\begin{pmatrix} 1\\\omega\\\omega \end{pmatrix} \;,\quad 
    |\psi_{1,2,1}\rangle = \frac{1}{\sqrt{3}}\begin{pmatrix} 1\\\omega^2\\1 \end{pmatrix} \;, \quad 
    |\psi_{1,2,2}\rangle = \frac{1}{\sqrt{3}}\begin{pmatrix} 1\\1\\\omega^2 \end{pmatrix} \;,
\end{align}
\end{widetext}
where $\omega= e^{2\pi i/d}$.
Note that the sets $\mathcal{B}_{k,l}:=\{|\psi_{k,l,m}\rangle\}_{m=0}^{2}$ constitute bases of $\C^3$ for each $(k,l)\in\ZZ_3^2$, and $\{\mathcal{B}_{k,l}\}_{(k,l)\in\ZZ_3^2}$ is a complete set of four mutually unbiased bases of $\C^3$.

\section{Proof of Proposition~\ref{thm:no_PPT-ENT_with_one_line_zeros}} \label{app:proof_thm_lines_BE}

The subgroups of $\ZZ_3^2$ are cyclic and correspond to vertical, horizontal, diagonal, and anti-diagonal lines in the affine plane $\ZZ_3^2$ (cf. Fig.~\ref{fig:striations}), or equivalently in the coefficient matrix $C$ in \eqref{eq:weight_matrix_d3}.
Hence, we need to show that if the coefficients $\ckl$ (satisfying $\ckl\geq0$ and $\sum_{k,l=0}^{2}\ckl=1$) corresponding to any coset $\ell\in\mathcal{C}_3$ are zero, the corresponding state $\rho\in\mathcal{M}_3$ is either separable or NPT entangled.
To check the latter, we can use Thm.~\ref{thm:PPT_crit_BDS}.
Due to the symmetries of $\mathcal{M}_3$ that preserve the entanglement class of $\rho\in\mathcal{M}_3$ \cite{baumgartner_special_2007, baumgartner_state_2006}, this reduces to six distinct cases.
There is a total of $432=|\mathrm{AGL}(2,3)|$ such symmetry transformations that map cosets to cosets (cf. Sec.~\ref{sec:bloch_vec_and_DFT}).
In the following, each case beyond the first is obtained from the former by adding one additional $\ckl\neq 0$ and accounting for these symmetries.

\textbf{Case 1:}
If only one $\ckl > 0$, it is necessarily equal to 1 (as $\sum_{k,l=0}^{2} \ckl=1$).
Utilizing the symmetries of $\mathcal{M}_3$, we can shift any one $\ckl$ to $c_{0,0}$ by local operations.
Hence, this case reduces to
\begin{align}
    C_1 = \begin{pmatrix}
        1 & 0 & 0 \\
        0 & 0 & 0 \\
        0 & 0 & 0
        \end{pmatrix} \;,
\end{align}
for which \eqref{eq:det(A_0)} yields $0<1$, and thus $\rho \in \NPT$.

\textbf{Case 2:}
If two $\ckl > 0$, we can locally shift one coefficient $\ckl$ to $c_{0,0}$ and the other one to $c_{0,1}$, so this case reduces to
\begin{align}
    C_2 = \begin{pmatrix}
        c_{0,0} & c_{0,1} & 0\\
        0 & 0 & 0 \\
        0 & 0 & 0
        \end{pmatrix} \;.
\end{align}
Equation \eqref{eq:det(A_0)} becomes $0<c_{0,0}^3 +c_{0,1}^3$, which is satisfied.
Hence, $\rho\in\NPT$.

\textbf{Case 3:}
Three coefficients $\ckl > 0$ reduce to two distinct cases after utilizing the symmetries of $\mathcal{M}_3$.
The first one is given by
\begin{align}
    C_{3a} = \begin{pmatrix}
        c_{0,0} & c_{0,1} & 0 \\
        c_{1,0} & 0 & 0 \\
        0 & 0 & 0 
        \end{pmatrix} \;,
\end{align}
with \eqref{eq:det(A_0)} being $0< c_{0,0}^3 + c_{0,1}^3 + c_{1,0}^3$, and thus $\rho\in\NPT$.
The second one is
\begin{align}
    C_{3b} = \begin{pmatrix}
        c_{0,0} & c_{0,1} & c_{0,2} \\
        0 & 0 & 0 \\
        0 & 0 & 0
        \end{pmatrix} \;,
\end{align}
for which \eqref{eq:det(A_0)} becomes (after some simplifications)
\begin{align}
    -|z|^2 \; \underbrace{(c_{0,0} + c_{0,1} + c_{0,2})}_{=1} <0 \;, \label{eq:det_A_0_3ckl}
\end{align}
where $z:=c_{0,0} + \omega \, c_{0,1} + \omega^\ast \, c_{0,2}$ and $\omega = e^{2\pi i/3}$.
The inequality is violated if and only if $|z|=0$, which is equivalent to $c_{0,0} = c_{0,1} = c_{0,2} = 1/3$ (because $\sumkl\ckl=1$), and corresponds to a separable subgroup state \cite{baumgartner_special_2007}.
Otherwise, the inequality is satisfied and $\rho\in \NPT$.

\textbf{Case 4:}
With four coefficients $\ckl > 0$, we again distinguish two cases.
The first one is
\begin{align} \label{eq:C_4a}
    C_{4a} = \begin{pmatrix}
        c_{0,0} & c_{0,1} & c_{0,2} \\
        c_{1,0} & 0 & 0 \\
        0 & 0 & 0
        \end{pmatrix} \;.
\end{align}
Equation \eqref{eq:det(A_0)} is given by
\begin{align}
    \underbrace{3 \, c_{0,0} \, c_{0,1} \, c_{0,2} - c_{0,0}^3 - c_{0,1}^3 - c_{0,2}^3}_{\leq0 \text{ by \eqref{eq:det_A_0_3ckl}}} \; \underbrace{- c_{1,0}^3}_{<0} <0 \;,
\end{align}
and thus $\rho\in\NPT$.
The second case is
\begin{align} \label{eq:C_4b}
    C_{4b} = \begin{pmatrix}
        c_{0,0} & c_{0,1} & 0 \\
        c_{1,0} & c_{1,1} & 0 \\
        0 & 0 & 0
        \end{pmatrix} \;,
\end{align}
yielding $0< c_{0,0}^3+c_{0,1}^3 + c_{1,0}^3 + c_{1,1}^3$, and thus $\rho\in\NPT$.

\textbf{Case 5:}
The case with five $\ckl > 0$ is obtained by adding one extra $\ckl$ to either \eqref{eq:C_4a} or \eqref{eq:C_4b}.
After accounting for the symmetries in $\mathcal{M}_3$, the only possibility with one complete line of zeros is
\begin{align}
    C_{5} = \begin{pmatrix}
        c_{0,0} & c_{0,1} & c_{0,2} \\
        c_{1,0} & c_{1,1} & 0 \\
        0 & 0 & 0
        \end{pmatrix} \;.
\end{align}
This gives
\begin{align}
    \underbrace{3 \, c_{0,0} \, c_{0,1} \, c_{0,2} - c_{0,0}^3 - c_{0,1}^3 - c_{0,2}^3}_{\leq0 \text{ by \eqref{eq:det_A_0_3ckl}}} \;\underbrace{- c_{1,0}^3 - c_{1,1}^3}_{<0} <0 \;,
\end{align}
and thus $\rho\in\NPT$.

\textbf{Case 6:}
Accounting for the symmetries of $\mathcal{M}_3$, the only possibility to have six $\ckl > 0$ but still one full line equal to zero is given by
\begin{align}
    C_{6} = \begin{pmatrix}
        c_{0,0} & c_{0,1} & c_{0,2} \\
        c_{1,0} & c_{1,1} & c_{1,2} \\
        0 & 0 & 0
        \end{pmatrix} \;,
\end{align}
for which \eqref{eq:det(A_0)} is
\begin{align}
    &\underbrace{3 \, c_{0,0} \, c_{0,1} \, c_{0,2} - c_{0,0}^3 - c_{0,1}^3 - c_{0,2}^3}_{\leq0 \text{ by \eqref{eq:det_A_0_3ckl}}} \; + \notag\\
    &\hspace{6mm} + \, \underbrace{3 \, c_{1,0} \, c_{1,1} \, c_{1,2} - c_{1,0}^3 - c_{1,1}^3 - c_{1,2}^3}_{\leq0 \text{ by \eqref{eq:det_A_0_3ckl}}} <0 \;.
\end{align}
Thus, the inequality is violated if and only if $c_{0,0} = c_{0,1}= c_{0,2}$ and $c_{1,0}=c_{1,1}=c_{1,2}$, which corresponds to a convex mixture of two separable subgroup states \eqref{eq:subgroup_state}, and thus is also separable.
Otherwise, it is satisfied and $\rho\in\NPT$.

\section{\texorpdfstring{Explicit form of $\kappa_{i,j}$ for $W_\NPT$}{Explicit form of kappaij for WNPT}} \label{app:explicit_form_of_W_NPT}

Here, we give the explicit form of $\kappa_{i,j}$ in \eqref{eq:kappa_ij} for the hyperplanes \eqref{eq:W_NPT}:

\begin{align}
    \kappa_{0,0} &= -c_{0,0}^2 + c_{0,1} c_{0,2} + c_{1,0} c_{2,0} + c_{1,1} c_{2,2} + c_{1,2} c_{2,1} \;, \notag\\
    \kappa_{0,1} &= -c_{0,1}^2 + c_{0,0} c_{0,2} + c_{1,1} c_{2,1} + c_{1,2} c_{2,0} + c_{1,0} c_{2,2} \;, \notag\\
    \kappa_{0,2} &= -c_{0,2}^2 + c_{0,0} c_{0,1} + c_{1,2} c_{2,2} + c_{1,0} c_{2,1} + c_{1,1} c_{2,0} \;, \notag\\
    \kappa_{1,0} &= -c_{1,0}^2 + c_{1,1} c_{1,2} + c_{0,0} c_{2,0} + c_{0,2} c_{2,1} + c_{0,1} c_{2,2} \;, \notag\\
    \kappa_{1,1} &= -c_{1,1}^2 + c_{1,0} c_{1,2} + c_{0,1} c_{2,1} + c_{0,0} c_{2,2} + c_{0,2} c_{2,0} \;, \notag\\
    \kappa_{1,2} &= -c_{1,2}^2 + c_{1,0} c_{1,1} + c_{0,2} c_{2,2} + c_{0,1} c_{2,0} + c_{0,0} c_{2,1} \;, \notag\\
    \kappa_{2,0} &= -c_{2,0}^2 + c_{2,1} c_{2,2} + c_{0,0} c_{1,0} + c_{0,1} c_{1,2} + c_{0,2} c_{1,1} \;, \notag\\
    \kappa_{2,1} &= -c_{2,1}^2 + c_{2,0} c_{2,2} + c_{0,1} c_{1,1} + c_{0,2} c_{1,0} + c_{0,0} c_{1,2} \;, \notag\\
    \kappa_{2,2} &= -c_{2,2}^2 + c_{2,0} c_{2,1} + c_{0,2} c_{1,2} + c_{0,0} c_{1,1} + c_{0,1} c_{1,0} \;. \notag
\end{align}
Here, the second, third, fourth, and fifth term in each $\kappa_{i,j}$ corresponds to the horizontal, vertical, diagonal, and anti-diagonal cosets containing $c_{i,j}$, respectively.

\bibliography{references}

\begin{thebibliography}{54}%
\makeatletter
\providecommand \@ifxundefined [1]{%
 \@ifx{#1\undefined}
}%
\providecommand \@ifnum [1]{%
 \ifnum #1\expandafter \@firstoftwo
 \else \expandafter \@secondoftwo
 \fi
}%
\providecommand \@ifx [1]{%
 \ifx #1\expandafter \@firstoftwo
 \else \expandafter \@secondoftwo
 \fi
}%
\providecommand \natexlab [1]{#1}%
\providecommand \enquote  [1]{``#1''}%
\providecommand \bibnamefont  [1]{#1}%
\providecommand \bibfnamefont [1]{#1}%
\providecommand \citenamefont [1]{#1}%
\providecommand \href@noop [0]{\@secondoftwo}%
\providecommand \href [0]{\begingroup \@sanitize@url \@href}%
\providecommand \@href[1]{\@@startlink{#1}\@@href}%
\providecommand \@@href[1]{\endgroup#1\@@endlink}%
\providecommand \@sanitize@url [0]{\catcode `\\12\catcode `\$12\catcode `\&12\catcode `\#12\catcode `\^12\catcode `\_12\catcode `\%12\relax}%
\providecommand \@@startlink[1]{}%
\providecommand \@@endlink[0]{}%
\providecommand \url  [0]{\begingroup\@sanitize@url \@url }%
\providecommand \@url [1]{\endgroup\@href {#1}{\urlprefix }}%
\providecommand \urlprefix  [0]{URL }%
\providecommand \Eprint [0]{\href }%
\providecommand \doibase [0]{https://doi.org/}%
\providecommand \selectlanguage [0]{\@gobble}%
\providecommand \bibinfo  [0]{\@secondoftwo}%
\providecommand \bibfield  [0]{\@secondoftwo}%
\providecommand \translation [1]{[#1]}%
\providecommand \BibitemOpen [0]{}%
\providecommand \bibitemStop [0]{}%
\providecommand \bibitemNoStop [0]{.\EOS\space}%
\providecommand \EOS [0]{\spacefactor3000\relax}%
\providecommand \BibitemShut  [1]{\csname bibitem#1\endcsname}%
\let\auto@bib@innerbib\@empty
\bibitem [{\citenamefont {Preskill}(2018)}]{preskill_quantum_2018}%
  \BibitemOpen
  \bibfield  {author} {\bibinfo {author} {\bibfnamefont {J.}~\bibnamefont {Preskill}},\ }\bibfield  {title} {\bibinfo {title} {Quantum {Computing} in the {NISQ} era and beyond},\ }\href {https://doi.org/10.22331/q-2018-08-06-79} {\bibfield  {journal} {\bibinfo  {journal} {Quantum}\ }\textbf {\bibinfo {volume} {2}},\ \bibinfo {pages} {79} (\bibinfo {year} {2018})},\ \bibinfo {note} {publisher: Verein zur Förderung des Open Access Publizierens in den Quantenwissenschaften}\BibitemShut {NoStop}%
\bibitem [{\citenamefont {Zapatero}\ \emph {et~al.}(2023)\citenamefont {Zapatero}, \citenamefont {van Leent}, \citenamefont {Arnon-Friedman}, \citenamefont {Liu}, \citenamefont {Zhang}, \citenamefont {Weinfurter},\ and\ \citenamefont {Curty}}]{zapatero_advances_2023}%
  \BibitemOpen
  \bibfield  {author} {\bibinfo {author} {\bibfnamefont {V.}~\bibnamefont {Zapatero}}, \bibinfo {author} {\bibfnamefont {T.}~\bibnamefont {van Leent}}, \bibinfo {author} {\bibfnamefont {R.}~\bibnamefont {Arnon-Friedman}}, \bibinfo {author} {\bibfnamefont {W.-Z.}\ \bibnamefont {Liu}}, \bibinfo {author} {\bibfnamefont {Q.}~\bibnamefont {Zhang}}, \bibinfo {author} {\bibfnamefont {H.}~\bibnamefont {Weinfurter}},\ and\ \bibinfo {author} {\bibfnamefont {M.}~\bibnamefont {Curty}},\ }\bibfield  {title} {\bibinfo {title} {Advances in device-independent quantum key distribution},\ }\href {https://doi.org/10.1038/s41534-023-00684-x} {\bibfield  {journal} {\bibinfo  {journal} {npj Quantum Information}\ }\textbf {\bibinfo {volume} {9}},\ \bibinfo {pages} {10} (\bibinfo {year} {2023})},\ \bibinfo {note} {publisher: Nature Publishing Group}\BibitemShut {NoStop}%
\bibitem [{\citenamefont {Bennett}\ \emph {et~al.}(1993)\citenamefont {Bennett}, \citenamefont {Brassard}, \citenamefont {Crépeau}, \citenamefont {Jozsa}, \citenamefont {Peres},\ and\ \citenamefont {Wootters}}]{bennett_teleporting_1993}%
  \BibitemOpen
  \bibfield  {author} {\bibinfo {author} {\bibfnamefont {C.~H.}\ \bibnamefont {Bennett}}, \bibinfo {author} {\bibfnamefont {G.}~\bibnamefont {Brassard}}, \bibinfo {author} {\bibfnamefont {C.}~\bibnamefont {Crépeau}}, \bibinfo {author} {\bibfnamefont {R.}~\bibnamefont {Jozsa}}, \bibinfo {author} {\bibfnamefont {A.}~\bibnamefont {Peres}},\ and\ \bibinfo {author} {\bibfnamefont {W.~K.}\ \bibnamefont {Wootters}},\ }\bibfield  {title} {\bibinfo {title} {Teleporting an unknown quantum state via dual classical and {Einstein}-{Podolsky}-{Rosen} channels},\ }\href {https://doi.org/10.1103/PhysRevLett.70.1895} {\bibfield  {journal} {\bibinfo  {journal} {Physical Review Letters}\ }\textbf {\bibinfo {volume} {70}},\ \bibinfo {pages} {1895} (\bibinfo {year} {1993})},\ \bibinfo {note} {publisher: American Physical Society}\BibitemShut {NoStop}%
\bibitem [{\citenamefont {Horodecki}\ \emph {et~al.}(2009)\citenamefont {Horodecki}, \citenamefont {Horodecki}, \citenamefont {Horodecki},\ and\ \citenamefont {Horodecki}}]{horodecki_quantum_2009}%
  \BibitemOpen
  \bibfield  {author} {\bibinfo {author} {\bibfnamefont {R.}~\bibnamefont {Horodecki}}, \bibinfo {author} {\bibfnamefont {P.}~\bibnamefont {Horodecki}}, \bibinfo {author} {\bibfnamefont {M.}~\bibnamefont {Horodecki}},\ and\ \bibinfo {author} {\bibfnamefont {K.}~\bibnamefont {Horodecki}},\ }\bibfield  {title} {\bibinfo {title} {Quantum entanglement},\ }\href {https://doi.org/10.1103/RevModPhys.81.865} {\bibfield  {journal} {\bibinfo  {journal} {Reviews of Modern Physics}\ }\textbf {\bibinfo {volume} {81}},\ \bibinfo {pages} {865} (\bibinfo {year} {2009})},\ \bibinfo {note} {publisher: American Physical Society}\BibitemShut {NoStop}%
\bibitem [{\citenamefont {Gurvits}(2003)}]{gurvits_classical_2003}%
  \BibitemOpen
  \bibfield  {author} {\bibinfo {author} {\bibfnamefont {L.}~\bibnamefont {Gurvits}},\ }\bibfield  {title} {\bibinfo {title} {Classical deterministic complexity of {Edmonds}' {Problem} and quantum entanglement},\ }in\ \href {https://doi.org/10.1145/780542.780545} {\emph {\bibinfo {booktitle} {Proceedings of the thirty-fifth annual {ACM} symposium on {Theory} of computing}}}\ (\bibinfo  {publisher} {ACM},\ \bibinfo {address} {San Diego CA USA},\ \bibinfo {year} {2003})\ pp.\ \bibinfo {pages} {10--19}\BibitemShut {NoStop}%
\bibitem [{\citenamefont {Gharibian}(2010)}]{gharibian_strong_2010}%
  \BibitemOpen
  \bibfield  {author} {\bibinfo {author} {\bibfnamefont {S.}~\bibnamefont {Gharibian}},\ }\bibfield  {title} {\bibinfo {title} {Strong {NP}-{Hardness} of the {Quantum} {Separability} {Problem}},\ }\href {https://doi.org/10.26421/QIC10.3-4-11} {\bibfield  {journal} {\bibinfo  {journal} {Quantum Information \& Computation}\ }\textbf {\bibinfo {volume} {10}},\ \bibinfo {pages} {343} (\bibinfo {year} {2010})}\BibitemShut {NoStop}%
\bibitem [{\citenamefont {Gühne}\ and\ \citenamefont {Tóth}(2009)}]{guhne_entanglement_2009}%
  \BibitemOpen
  \bibfield  {author} {\bibinfo {author} {\bibfnamefont {O.}~\bibnamefont {Gühne}}\ and\ \bibinfo {author} {\bibfnamefont {G.}~\bibnamefont {Tóth}},\ }\bibfield  {title} {\bibinfo {title} {Entanglement detection},\ }\href {https://doi.org/10.1016/j.physrep.2009.02.004} {\bibfield  {journal} {\bibinfo  {journal} {Physics Reports}\ }\textbf {\bibinfo {volume} {474}},\ \bibinfo {pages} {1} (\bibinfo {year} {2009})}\BibitemShut {NoStop}%
\bibitem [{\citenamefont {Bennett}\ \emph {et~al.}(1996{\natexlab{a}})\citenamefont {Bennett}, \citenamefont {Bernstein}, \citenamefont {Popescu},\ and\ \citenamefont {Schumacher}}]{bennett_concentrating_1996}%
  \BibitemOpen
  \bibfield  {author} {\bibinfo {author} {\bibfnamefont {C.~H.}\ \bibnamefont {Bennett}}, \bibinfo {author} {\bibfnamefont {H.~J.}\ \bibnamefont {Bernstein}}, \bibinfo {author} {\bibfnamefont {S.}~\bibnamefont {Popescu}},\ and\ \bibinfo {author} {\bibfnamefont {B.}~\bibnamefont {Schumacher}},\ }\bibfield  {title} {\bibinfo {title} {Concentrating partial entanglement by local operations},\ }\href {https://doi.org/10.1103/PhysRevA.53.2046} {\bibfield  {journal} {\bibinfo  {journal} {Physical Review A}\ }\textbf {\bibinfo {volume} {53}},\ \bibinfo {pages} {2046} (\bibinfo {year} {1996}{\natexlab{a}})},\ \bibinfo {note} {publisher: American Physical Society}\BibitemShut {NoStop}%
\bibitem [{\citenamefont {Peres}(1996)}]{peres_separability_1996}%
  \BibitemOpen
  \bibfield  {author} {\bibinfo {author} {\bibfnamefont {A.}~\bibnamefont {Peres}},\ }\bibfield  {title} {\bibinfo {title} {Separability {Criterion} for {Density} {Matrices}},\ }\href {https://doi.org/10.1103/physrevlett.77.1413} {\bibfield  {journal} {\bibinfo  {journal} {Physical Review Letters}\ }\textbf {\bibinfo {volume} {77}},\ \bibinfo {pages} {1413} (\bibinfo {year} {1996})},\ \bibinfo {note} {publisher: American Physical Society (APS)}\BibitemShut {NoStop}%
\bibitem [{\citenamefont {Horodecki}\ \emph {et~al.}(1996)\citenamefont {Horodecki}, \citenamefont {Horodecki},\ and\ \citenamefont {Horodecki}}]{horodecki_separability_1996}%
  \BibitemOpen
  \bibfield  {author} {\bibinfo {author} {\bibfnamefont {M.}~\bibnamefont {Horodecki}}, \bibinfo {author} {\bibfnamefont {P.}~\bibnamefont {Horodecki}},\ and\ \bibinfo {author} {\bibfnamefont {R.}~\bibnamefont {Horodecki}},\ }\bibfield  {title} {\bibinfo {title} {Separability of mixed states: necessary and sufficient conditions},\ }\href {https://doi.org/10.1016/s0375-9601(96)00706-2} {\bibfield  {journal} {\bibinfo  {journal} {Physics Letters A}\ }\textbf {\bibinfo {volume} {223}},\ \bibinfo {pages} {1} (\bibinfo {year} {1996})},\ \bibinfo {note} {publisher: Elsevier BV}\BibitemShut {NoStop}%
\bibitem [{\citenamefont {Woronowicz}(1976)}]{woronowicz_positive_1976}%
  \BibitemOpen
  \bibfield  {author} {\bibinfo {author} {\bibfnamefont {S.~L.}\ \bibnamefont {Woronowicz}},\ }\bibfield  {title} {\bibinfo {title} {Positive maps of low dimensional matrix algebras},\ }\href {https://doi.org/10.1016/0034-4877(76)90038-0} {\bibfield  {journal} {\bibinfo  {journal} {Reports on Mathematical Physics}\ }\textbf {\bibinfo {volume} {10}},\ \bibinfo {pages} {165} (\bibinfo {year} {1976})}\BibitemShut {NoStop}%
\bibitem [{\citenamefont {Horodecki}(1997)}]{horodecki_separability_1997}%
  \BibitemOpen
  \bibfield  {author} {\bibinfo {author} {\bibfnamefont {P.}~\bibnamefont {Horodecki}},\ }\bibfield  {title} {\bibinfo {title} {Separability criterion and inseparable mixed states with positive partial transposition},\ }\href {https://doi.org/10.1016/S0375-9601(97)00416-7} {\bibfield  {journal} {\bibinfo  {journal} {Physics Letters A}\ }\textbf {\bibinfo {volume} {232}},\ \bibinfo {pages} {333} (\bibinfo {year} {1997})}\BibitemShut {NoStop}%
\bibitem [{\citenamefont {Horodecki}\ \emph {et~al.}(1998)\citenamefont {Horodecki}, \citenamefont {Horodecki},\ and\ \citenamefont {Horodecki}}]{horodecki_mixed-state_1998}%
  \BibitemOpen
  \bibfield  {author} {\bibinfo {author} {\bibfnamefont {M.}~\bibnamefont {Horodecki}}, \bibinfo {author} {\bibfnamefont {P.}~\bibnamefont {Horodecki}},\ and\ \bibinfo {author} {\bibfnamefont {R.}~\bibnamefont {Horodecki}},\ }\bibfield  {title} {\bibinfo {title} {Mixed-{State} {Entanglement} and {Distillation}: {Is} there a “{Bound}” {Entanglement} in {Nature}?},\ }\href {https://doi.org/10.1103/physrevlett.80.5239} {\bibfield  {journal} {\bibinfo  {journal} {Physical Review Letters}\ }\textbf {\bibinfo {volume} {80}},\ \bibinfo {pages} {5239} (\bibinfo {year} {1998})},\ \bibinfo {note} {publisher: American Physical Society (APS)}\BibitemShut {NoStop}%
\bibitem [{\citenamefont {Chruściński}\ and\ \citenamefont {Sarbicki}(2014)}]{chruscinski_entanglement_2014}%
  \BibitemOpen
  \bibfield  {author} {\bibinfo {author} {\bibfnamefont {D.}~\bibnamefont {Chruściński}}\ and\ \bibinfo {author} {\bibfnamefont {G.}~\bibnamefont {Sarbicki}},\ }\bibfield  {title} {\bibinfo {title} {Entanglement witnesses: construction, analysis and classification},\ }\href {https://doi.org/10.1088/1751-8113/47/48/483001} {\bibfield  {journal} {\bibinfo  {journal} {Journal of Physics A: Mathematical and Theoretical}\ }\textbf {\bibinfo {volume} {47}},\ \bibinfo {pages} {483001} (\bibinfo {year} {2014})},\ \bibinfo {note} {publisher: IOP Publishing}\BibitemShut {NoStop}%
\bibitem [{\citenamefont {Hiesmayr}\ \emph {et~al.}(2025)\citenamefont {Hiesmayr}, \citenamefont {Popp},\ and\ \citenamefont {Sutter}}]{hiesmayr_bipartite_2025}%
  \BibitemOpen
  \bibfield  {author} {\bibinfo {author} {\bibfnamefont {B.~C.}\ \bibnamefont {Hiesmayr}}, \bibinfo {author} {\bibfnamefont {C.}~\bibnamefont {Popp}},\ and\ \bibinfo {author} {\bibfnamefont {T.~C.}\ \bibnamefont {Sutter}},\ }\bibfield  {title} {\bibinfo {title} {Bipartite bound entanglement},\ }\href {https://doi.org/10.1142/S0219749925300037} {\bibfield  {journal} {\bibinfo  {journal} {International Journal of Quantum Information}\ }\textbf {\bibinfo {volume} {23}},\ \bibinfo {pages} {2530003} (\bibinfo {year} {2025})},\ \bibinfo {note} {publisher: World Scientific Publishing Co.}\BibitemShut {Stop}%
\bibitem [{\citenamefont {Chen}\ and\ \citenamefont {Wu}(2003)}]{chen_matrix_2003}%
  \BibitemOpen
  \bibfield  {author} {\bibinfo {author} {\bibfnamefont {K.}~\bibnamefont {Chen}}\ and\ \bibinfo {author} {\bibfnamefont {L.~A.}\ \bibnamefont {Wu}},\ }\bibfield  {title} {\bibinfo {title} {A matrix realignment method for recognizing entanglement},\ }\href {https://doi.org/10.26421/QIC3.3-1} {\bibfield  {journal} {\bibinfo  {journal} {Quant. Inf. Comput.}\ }\textbf {\bibinfo {volume} {3}},\ \bibinfo {pages} {193} (\bibinfo {year} {2003})}\BibitemShut {NoStop}%
\bibitem [{\citenamefont {Rudolph}(2005)}]{rudolph_further_2005}%
  \BibitemOpen
  \bibfield  {author} {\bibinfo {author} {\bibfnamefont {O.}~\bibnamefont {Rudolph}},\ }\bibfield  {title} {\bibinfo {title} {Further {Results} on the {Cross} {Norm} {Criterion} for {Separability}},\ }\href {https://doi.org/10.1007/s11128-005-5664-1} {\bibfield  {journal} {\bibinfo  {journal} {Quantum Information Processing}\ }\textbf {\bibinfo {volume} {4}},\ \bibinfo {pages} {219} (\bibinfo {year} {2005})},\ \bibinfo {note} {company: Springer Distributor: Springer Institution: Springer Label: Springer Number: 3 Publisher: Kluwer Academic Publishers-Plenum Publishers}\BibitemShut {NoStop}%
\bibitem [{\citenamefont {Chen}\ and\ \citenamefont {Wu}(2004)}]{chen_test_2004}%
  \BibitemOpen
  \bibfield  {author} {\bibinfo {author} {\bibfnamefont {K.}~\bibnamefont {Chen}}\ and\ \bibinfo {author} {\bibfnamefont {L.-A.}\ \bibnamefont {Wu}},\ }\bibfield  {title} {\bibinfo {title} {Test for entanglement using physically observable witness operators and positive maps},\ }\href {https://doi.org/10.1103/PhysRevA.69.022312} {\bibfield  {journal} {\bibinfo  {journal} {Physical Review A}\ }\textbf {\bibinfo {volume} {69}},\ \bibinfo {pages} {022312} (\bibinfo {year} {2004})},\ \bibinfo {note} {publisher: American Physical Society}\BibitemShut {NoStop}%
\bibitem [{\citenamefont {Ranade}\ and\ \citenamefont {Alber}(2006)}]{ranade_asymptotic_2006}%
  \BibitemOpen
  \bibfield  {author} {\bibinfo {author} {\bibfnamefont {K.~S.}\ \bibnamefont {Ranade}}\ and\ \bibinfo {author} {\bibfnamefont {G.}~\bibnamefont {Alber}},\ }\bibfield  {title} {\bibinfo {title} {Asymptotic correctability of {Bell}-diagonal qudit states and lower bounds on tolerable error probabilities in quantum cryptography},\ }\href {https://doi.org/10.1088/1751-8113/40/1/008} {\bibfield  {journal} {\bibinfo  {journal} {Journal of Physics A: Mathematical and Theoretical}\ }\textbf {\bibinfo {volume} {40}},\ \bibinfo {pages} {139} (\bibinfo {year} {2006})}\BibitemShut {NoStop}%
\bibitem [{\citenamefont {Baumgartner}\ \emph {et~al.}(2007)\citenamefont {Baumgartner}, \citenamefont {Hiesmayr},\ and\ \citenamefont {Narnhofer}}]{baumgartner_special_2007}%
  \BibitemOpen
  \bibfield  {author} {\bibinfo {author} {\bibfnamefont {B.}~\bibnamefont {Baumgartner}}, \bibinfo {author} {\bibfnamefont {B.}~\bibnamefont {Hiesmayr}},\ and\ \bibinfo {author} {\bibfnamefont {H.}~\bibnamefont {Narnhofer}},\ }\bibfield  {title} {\bibinfo {title} {A special simplex in the state space for entangled qudits},\ }\href {https://doi.org/10.1088/1751-8113/40/28/s03} {\bibfield  {journal} {\bibinfo  {journal} {Journal of Physics A: Mathematical and Theoretical}\ }\textbf {\bibinfo {volume} {40}},\ \bibinfo {pages} {7919} (\bibinfo {year} {2007})},\ \bibinfo {note} {publisher: IOP Publishing}\BibitemShut {NoStop}%
\bibitem [{\citenamefont {Bennett}\ \emph {et~al.}(1996{\natexlab{b}})\citenamefont {Bennett}, \citenamefont {DiVincenzo}, \citenamefont {Smolin},\ and\ \citenamefont {Wootters}}]{bennett_mixed-state_1996}%
  \BibitemOpen
  \bibfield  {author} {\bibinfo {author} {\bibfnamefont {C.~H.}\ \bibnamefont {Bennett}}, \bibinfo {author} {\bibfnamefont {D.~P.}\ \bibnamefont {DiVincenzo}}, \bibinfo {author} {\bibfnamefont {J.~A.}\ \bibnamefont {Smolin}},\ and\ \bibinfo {author} {\bibfnamefont {W.~K.}\ \bibnamefont {Wootters}},\ }\bibfield  {title} {\bibinfo {title} {Mixed-state entanglement and quantum error correction},\ }\href {https://doi.org/10.1103/PhysRevA.54.3824} {\bibfield  {journal} {\bibinfo  {journal} {Physical Review A}\ }\textbf {\bibinfo {volume} {54}},\ \bibinfo {pages} {3824} (\bibinfo {year} {1996}{\natexlab{b}})},\ \bibinfo {note} {publisher: American Physical Society}\BibitemShut {NoStop}%
\bibitem [{\citenamefont {Popp}\ \emph {et~al.}(2025{\natexlab{a}})\citenamefont {Popp}, \citenamefont {Sutter},\ and\ \citenamefont {Hiesmayr}}]{popp_novel_2025}%
  \BibitemOpen
  \bibfield  {author} {\bibinfo {author} {\bibfnamefont {C.}~\bibnamefont {Popp}}, \bibinfo {author} {\bibfnamefont {T.~C.}\ \bibnamefont {Sutter}},\ and\ \bibinfo {author} {\bibfnamefont {B.~C.}\ \bibnamefont {Hiesmayr}},\ }\bibfield  {title} {\bibinfo {title} {A {Novel} {Stabilizer}-based {Entanglement} {Distillation} {Protocol} for {Qudits}},\ }\href {https://doi.org/10.22331/q-2025-12-15-1945} {\bibfield  {journal} {\bibinfo  {journal} {Quantum}\ }\textbf {\bibinfo {volume} {9}},\ \bibinfo {pages} {1945} (\bibinfo {year} {2025}{\natexlab{a}})}\BibitemShut {NoStop}%
\bibitem [{\citenamefont {Popp}\ and\ \citenamefont {Hiesmayr}(2022)}]{popp_almost_2022}%
  \BibitemOpen
  \bibfield  {author} {\bibinfo {author} {\bibfnamefont {C.}~\bibnamefont {Popp}}\ and\ \bibinfo {author} {\bibfnamefont {B.~C.}\ \bibnamefont {Hiesmayr}},\ }\bibfield  {title} {\bibinfo {title} {Almost complete solution for the {NP}-hard separability problem of {Bell} diagonal qutrits},\ }\href {https://doi.org/10.1038/s41598-022-16225-z} {\bibfield  {journal} {\bibinfo  {journal} {Scientific Reports}\ }\textbf {\bibinfo {volume} {12}},\ \bibinfo {pages} {12472} (\bibinfo {year} {2022})}\BibitemShut {NoStop}%
\bibitem [{\citenamefont {Popp}\ and\ \citenamefont {Hiesmayr}(2024)}]{popp_special_2024}%
  \BibitemOpen
  \bibfield  {author} {\bibinfo {author} {\bibfnamefont {C.}~\bibnamefont {Popp}}\ and\ \bibinfo {author} {\bibfnamefont {B.~C.}\ \bibnamefont {Hiesmayr}},\ }\bibfield  {title} {\bibinfo {title} {Special features of the {Weyl}–{Heisenberg} {Bell} basis imply unusual entanglement structure of {Bell}-diagonal states},\ }\href {https://doi.org/10.1088/1367-2630/ad1d0e} {\bibfield  {journal} {\bibinfo  {journal} {New Journal of Physics}\ }\textbf {\bibinfo {volume} {26}},\ \bibinfo {pages} {013039} (\bibinfo {year} {2024})}\BibitemShut {NoStop}%
\bibitem [{\citenamefont {Popp}\ and\ \citenamefont {Hiesmayr}(2023)}]{popp_comparing_2023}%
  \BibitemOpen
  \bibfield  {author} {\bibinfo {author} {\bibfnamefont {C.}~\bibnamefont {Popp}}\ and\ \bibinfo {author} {\bibfnamefont {B.~C.}\ \bibnamefont {Hiesmayr}},\ }\bibfield  {title} {\bibinfo {title} {Comparing bound entanglement of bell diagonal pairs of qutrits and ququarts},\ }\bibfield  {journal} {\bibinfo  {journal} {Scientific Reports}\ }\textbf {\bibinfo {volume} {13}},\ \href {https://doi.org/10.1038/s41598-023-29211-w} {10.1038/s41598-023-29211-w} (\bibinfo {year} {2023}),\ \bibinfo {note} {publisher: Springer Science and Business Media LLC}\BibitemShut {NoStop}%
\bibitem [{\citenamefont {Popp}\ \emph {et~al.}(2025{\natexlab{b}})\citenamefont {Popp}, \citenamefont {Sutter},\ and\ \citenamefont {Hiesmayr}}]{popp_low-fidelity_2025}%
  \BibitemOpen
  \bibfield  {author} {\bibinfo {author} {\bibfnamefont {C.}~\bibnamefont {Popp}}, \bibinfo {author} {\bibfnamefont {T.~C.}\ \bibnamefont {Sutter}},\ and\ \bibinfo {author} {\bibfnamefont {B.~C.}\ \bibnamefont {Hiesmayr}},\ }\href {https://doi.org/10.48550/arXiv.2502.09261} {\bibinfo {title} {Low-{Fidelity} {Entanglement} {Distillation} with {FIMAX}}} (\bibinfo {year} {2025}{\natexlab{b}}),\ \bibinfo {note} {arXiv:2502.09261 [quant-ph]}\BibitemShut {NoStop}%
\bibitem [{\citenamefont {Popp}(2023)}]{popp_belldiagonalqudits_2023}%
  \BibitemOpen
  \bibfield  {author} {\bibinfo {author} {\bibfnamefont {C.}~\bibnamefont {Popp}},\ }\bibfield  {title} {\bibinfo {title} {{BellDiagonalQudits}: {A} package for entanglement analyses of mixed maximally entangled qudits},\ }\href {https://doi.org/10.21105/joss.04924} {\bibfield  {journal} {\bibinfo  {journal} {Journal of Open Source Software}\ }\textbf {\bibinfo {volume} {8}},\ \bibinfo {pages} {4924} (\bibinfo {year} {2023})}\BibitemShut {NoStop}%
\bibitem [{Note1()}]{Note1}%
  \BibitemOpen
  \bibinfo {note} {Sec.~8.6 of Ref.~\cite {golub_matrix_2013} discusses various algorithms with a scaling of $\protect \mathcal {O}(mn^2)$ for computing the singular values $\{\sigma _i(M)\}_{i=1}^{n}$ of a (possibly dense) matrix $M\in \protect \mathbbm {C}^{m\times n}$ with $m\geq n$. Applied to $\rho _R\in \protect \mathbbm {C}^{d_A^2\times d_B^2}$ this yields: If $d_A \geq d_B$, we can compute $\sigma _i(\rho _R)$ for $1\leq i\leq d_B^2$ in $\protect \mathcal {O}(d_A^2 d_B^4)$; if $d_A \leq d_B$, we can compute $\sigma _i(\rho _R^T) = \sigma _i(\rho _R)$ for $1\leq i\leq d_A^2$ in $\protect \mathcal {O}(d_B^2 d_A^4)$ as $\rho _R^T \in \protect \mathbbm {C}^{d_B^2\times d_A^2}$. In either case, computing the trace norm of $\rho _R$ scales as $\protect \mathcal {O}(\min (d_A^4 d_B^2, d_A^2 d_B^4))$.}\BibitemShut {Stop}%
\bibitem [{\citenamefont {Sarbicki}\ \emph {et~al.}(2020)\citenamefont {Sarbicki}, \citenamefont {Scala},\ and\ \citenamefont {Chruściński}}]{sarbicki_family_2020}%
  \BibitemOpen
  \bibfield  {author} {\bibinfo {author} {\bibfnamefont {G.}~\bibnamefont {Sarbicki}}, \bibinfo {author} {\bibfnamefont {G.}~\bibnamefont {Scala}},\ and\ \bibinfo {author} {\bibfnamefont {D.}~\bibnamefont {Chruściński}},\ }\bibfield  {title} {\bibinfo {title} {Family of multipartite separability criteria based on a correlation tensor},\ }\href {https://doi.org/10.1103/PhysRevA.101.012341} {\bibfield  {journal} {\bibinfo  {journal} {Physical Review A}\ }\textbf {\bibinfo {volume} {101}},\ \bibinfo {pages} {012341} (\bibinfo {year} {2020})},\ \bibinfo {note} {publisher: American Physical Society}\BibitemShut {NoStop}%
\bibitem [{\citenamefont {Horodecki}\ \emph {et~al.}(2006)\citenamefont {Horodecki}, \citenamefont {Horodecki},\ and\ \citenamefont {Horodecki}}]{Horodecki_SeparabilityMixedQuantum_2006}%
  \BibitemOpen
  \bibfield  {author} {\bibinfo {author} {\bibfnamefont {M.}~\bibnamefont {Horodecki}}, \bibinfo {author} {\bibfnamefont {P.}~\bibnamefont {Horodecki}},\ and\ \bibinfo {author} {\bibfnamefont {R.}~\bibnamefont {Horodecki}},\ }\bibfield  {title} {\bibinfo {title} {Separability of {{Mixed Quantum States}}: {{Linear Contractions}} and {{Permutation Criteria}}},\ }\href {https://doi.org/10.1007/s11080-006-7271-8} {\bibfield  {journal} {\bibinfo  {journal} {Open Systems \& Information Dynamics}\ }\textbf {\bibinfo {volume} {13}},\ \bibinfo {pages} {103} (\bibinfo {year} {2006})}\BibitemShut {NoStop}%
\bibitem [{\citenamefont {Wocjan}\ and\ \citenamefont {Horodecki}(2005)}]{wocjan_characterization_2005}%
  \BibitemOpen
  \bibfield  {author} {\bibinfo {author} {\bibfnamefont {P.}~\bibnamefont {Wocjan}}\ and\ \bibinfo {author} {\bibfnamefont {M.}~\bibnamefont {Horodecki}},\ }\bibfield  {title} {\bibinfo {title} {Characterization of {Combinatorially} {Independent} {Permutation} {Separability} {Criteria}},\ }\href {https://doi.org/10.1007/s11080-005-4483-2} {\bibfield  {journal} {\bibinfo  {journal} {Open Systems \& Information Dynamics}\ }\textbf {\bibinfo {volume} {12}},\ \bibinfo {pages} {331} (\bibinfo {year} {2005})}\BibitemShut {NoStop}%
\bibitem [{Note2()}]{Note2}%
  \BibitemOpen
  \bibinfo {note} {This is also called the standard or ``magic'' Bell basis due to its special properties given by the Weyl relations \protect \eqref {eq:Weyl_relations}, and to distinguish it from the generalization considered in Ref.~\cite {popp_special_2024}. This generalization, however, lacks many desirable features of the standard Bell basis, e.g., the generating unitaries generally do not form a group.}\BibitemShut {Stop}%
\bibitem [{Note3()}]{Note3}%
  \BibitemOpen
  \bibinfo {note} {An example of the following group-theoretic considerations for $d=3$ is presented in App.~\ref {app:group_structure_M_3}.}\BibitemShut {Stop}%
\bibitem [{\citenamefont {Baumgartner}\ \emph {et~al.}(2006)\citenamefont {Baumgartner}, \citenamefont {Hiesmayr},\ and\ \citenamefont {Narnhofer}}]{baumgartner_state_2006}%
  \BibitemOpen
  \bibfield  {author} {\bibinfo {author} {\bibfnamefont {B.}~\bibnamefont {Baumgartner}}, \bibinfo {author} {\bibfnamefont {B.~C.}\ \bibnamefont {Hiesmayr}},\ and\ \bibinfo {author} {\bibfnamefont {H.}~\bibnamefont {Narnhofer}},\ }\bibfield  {title} {\bibinfo {title} {State space for two qutrits has a phase space structure in its core},\ }\bibfield  {journal} {\bibinfo  {journal} {Physical Review A}\ }\textbf {\bibinfo {volume} {74}},\ \href {https://doi.org/10.1103/physreva.74.032327} {10.1103/physreva.74.032327} (\bibinfo {year} {2006}),\ \bibinfo {note} {publisher: American Physical Society (APS)}\BibitemShut {NoStop}%
\bibitem [{\citenamefont {Gibbons}\ \emph {et~al.}(2004)\citenamefont {Gibbons}, \citenamefont {Hoffman},\ and\ \citenamefont {Wootters}}]{gibbons_discrete_2004}%
  \BibitemOpen
  \bibfield  {author} {\bibinfo {author} {\bibfnamefont {K.~S.}\ \bibnamefont {Gibbons}}, \bibinfo {author} {\bibfnamefont {M.~J.}\ \bibnamefont {Hoffman}},\ and\ \bibinfo {author} {\bibfnamefont {W.~K.}\ \bibnamefont {Wootters}},\ }\bibfield  {title} {\bibinfo {title} {Discrete phase space based on finite fields},\ }\bibfield  {journal} {\bibinfo  {journal} {Physical Review A}\ }\textbf {\bibinfo {volume} {70}},\ \href {https://doi.org/10.1103/physreva.70.062101} {10.1103/physreva.70.062101} (\bibinfo {year} {2004}),\ \bibinfo {note} {publisher: American Physical Society (APS)}\BibitemShut {NoStop}%
\bibitem [{\citenamefont {Beth}\ \emph {et~al.}(1999)\citenamefont {Beth}, \citenamefont {Jungnickel},\ and\ \citenamefont {Lenz}}]{beth_design_1999}%
  \BibitemOpen
  \bibfield  {author} {\bibinfo {author} {\bibfnamefont {T.}~\bibnamefont {Beth}}, \bibinfo {author} {\bibfnamefont {D.}~\bibnamefont {Jungnickel}},\ and\ \bibinfo {author} {\bibfnamefont {H.}~\bibnamefont {Lenz}},\ }\href@noop {} {\emph {\bibinfo {title} {Design {Theory}}}}\ (\bibinfo  {publisher} {Cambridge University Press},\ \bibinfo {year} {1999})\BibitemShut {NoStop}%
\bibitem [{\citenamefont {Bengtsson}\ and\ \citenamefont {Życzkowski}(2006)}]{bengtsson_geometry_2006}%
  \BibitemOpen
  \bibfield  {author} {\bibinfo {author} {\bibfnamefont {I.}~\bibnamefont {Bengtsson}}\ and\ \bibinfo {author} {\bibfnamefont {K.}~\bibnamefont {Życzkowski}},\ }\href {https://doi.org/10.1017/CBO9780511535048} {\emph {\bibinfo {title} {Geometry of {Quantum} {States}: {An} {Introduction} to {Quantum} {Entanglement}}}}\ (\bibinfo  {publisher} {Cambridge University Press},\ \bibinfo {address} {Cambridge},\ \bibinfo {year} {2006})\BibitemShut {NoStop}%
\bibitem [{\citenamefont {Narnhofer}(2006)}]{narnhofer_entanglement_2006}%
  \BibitemOpen
  \bibfield  {author} {\bibinfo {author} {\bibfnamefont {H.}~\bibnamefont {Narnhofer}},\ }\bibfield  {title} {\bibinfo {title} {Entanglement reflected in {Wigner} functions},\ }\href {https://doi.org/10.1088/0305-4470/39/22/017} {\bibfield  {journal} {\bibinfo  {journal} {Journal of Physics A: Mathematical and General}\ }\textbf {\bibinfo {volume} {39}},\ \bibinfo {pages} {7051} (\bibinfo {year} {2006})}\BibitemShut {NoStop}%
\bibitem [{\citenamefont {Rudin}(1962)}]{rudin_fourier_1962}%
  \BibitemOpen
  \bibfield  {author} {\bibinfo {author} {\bibfnamefont {W.}~\bibnamefont {Rudin}},\ }\href@noop {} {\emph {\bibinfo {title} {Fourier analysis on groups}}}\ (\bibinfo  {publisher} {Interscience Publishers},\ \bibinfo {address} {New York},\ \bibinfo {year} {1962})\BibitemShut {NoStop}%
\bibitem [{\citenamefont {Golub}\ and\ \citenamefont {Van~Loan}(2013)}]{golub_matrix_2013}%
  \BibitemOpen
  \bibfield  {author} {\bibinfo {author} {\bibfnamefont {G.~H.}\ \bibnamefont {Golub}}\ and\ \bibinfo {author} {\bibfnamefont {C.~F.}\ \bibnamefont {Van~Loan}},\ }\href@noop {} {\emph {\bibinfo {title} {Matrix {Computations}}}}\ (\bibinfo  {publisher} {Johns Hopkins University Press},\ \bibinfo {year} {2013})\BibitemShut {NoStop}%
\bibitem [{\citenamefont {Müller-Hermes}(2021)}]{muller-hermes_decomposable_2021}%
  \BibitemOpen
  \bibfield  {author} {\bibinfo {author} {\bibfnamefont {A.}~\bibnamefont {Müller-Hermes}},\ }\bibfield  {title} {\bibinfo {title} {Decomposable {Pauli} diagonal maps and tensor squares of qubit maps},\ }\href {https://doi.org/10.1063/5.0049533} {\bibfield  {journal} {\bibinfo  {journal} {Journal of Mathematical Physics}\ }\textbf {\bibinfo {volume} {62}},\ \bibinfo {pages} {092204} (\bibinfo {year} {2021})}\BibitemShut {NoStop}%
\bibitem [{\citenamefont {Augusiak}\ \emph {et~al.}(2008)\citenamefont {Augusiak}, \citenamefont {Demianowicz},\ and\ \citenamefont {Horodecki}}]{augusiak_universal_2008}%
  \BibitemOpen
  \bibfield  {author} {\bibinfo {author} {\bibfnamefont {R.}~\bibnamefont {Augusiak}}, \bibinfo {author} {\bibfnamefont {M.}~\bibnamefont {Demianowicz}},\ and\ \bibinfo {author} {\bibfnamefont {P.}~\bibnamefont {Horodecki}},\ }\bibfield  {title} {\bibinfo {title} {Universal observable detecting all two-qubit entanglement and determinant-based separability tests},\ }\bibfield  {journal} {\bibinfo  {journal} {Physical Review A}\ }\textbf {\bibinfo {volume} {77}},\ \href {https://doi.org/10.1103/physreva.77.030301} {10.1103/physreva.77.030301} (\bibinfo {year} {2008}),\ \bibinfo {note} {publisher: American Physical Society (APS)}\BibitemShut {NoStop}%
\bibitem [{\citenamefont {Bhatia}(1997)}]{bhatia_matrix_1997}%
  \BibitemOpen
  \bibfield  {author} {\bibinfo {author} {\bibfnamefont {R.}~\bibnamefont {Bhatia}},\ }\href@noop {} {\emph {\bibinfo {title} {Matrix analysis}}},\ {GTM}\ (\bibinfo  {publisher} {Springer},\ \bibinfo {address} {New York},\ \bibinfo {year} {1997})\BibitemShut {NoStop}%
\bibitem [{\citenamefont {Rana}(2013)}]{rana_negative_2013}%
  \BibitemOpen
  \bibfield  {author} {\bibinfo {author} {\bibfnamefont {S.}~\bibnamefont {Rana}},\ }\bibfield  {title} {\bibinfo {title} {Negative eigenvalues of partial transposition of arbitrary bipartite states},\ }\bibfield  {journal} {\bibinfo  {journal} {Physical Review A}\ }\textbf {\bibinfo {volume} {87}},\ \href {https://doi.org/10.1103/physreva.87.054301} {10.1103/physreva.87.054301} (\bibinfo {year} {2013}),\ \bibinfo {note} {publisher: American Physical Society (APS)}\BibitemShut {NoStop}%
\bibitem [{\citenamefont {Johnston}(2013)}]{johnston_non-positive-partial-transpose_2013}%
  \BibitemOpen
  \bibfield  {author} {\bibinfo {author} {\bibfnamefont {N.}~\bibnamefont {Johnston}},\ }\bibfield  {title} {\bibinfo {title} {Non-positive-partial-transpose subspaces can be as large as any entangled subspace},\ }\bibfield  {journal} {\bibinfo  {journal} {Physical Review A}\ }\textbf {\bibinfo {volume} {87}},\ \href {https://doi.org/10.1103/physreva.87.064302} {10.1103/physreva.87.064302} (\bibinfo {year} {2013}),\ \bibinfo {note} {publisher: American Physical Society (APS)}\BibitemShut {NoStop}%
\bibitem [{Note4()}]{Note4}%
  \BibitemOpen
  \bibinfo {note} {Note that this bound on the number of negative eigenvalues also holds more generally for entanglement witnesses \cite {Sarbicki_Spectralpropertiesentanglement_2008}, reflecting the block-positivity of these operators.}\BibitemShut {Stop}%
\bibitem [{\citenamefont {Watrous}(2018)}]{watrous_theory_2018}%
  \BibitemOpen
  \bibfield  {author} {\bibinfo {author} {\bibfnamefont {J.}~\bibnamefont {Watrous}},\ }\href {https://doi.org/10.1017/9781316848142} {\emph {\bibinfo {title} {The {Theory} of {Quantum} {Information}}}},\ \bibinfo {edition} {1st}\ ed.\ (\bibinfo  {publisher} {Cambridge University Press},\ \bibinfo {year} {2018})\BibitemShut {NoStop}%
\bibitem [{\citenamefont {Choi}(1975)}]{choi_completely_1975}%
  \BibitemOpen
  \bibfield  {author} {\bibinfo {author} {\bibfnamefont {M.-D.}\ \bibnamefont {Choi}},\ }\bibfield  {title} {\bibinfo {title} {Completely positive linear maps on complex matrices},\ }\href {https://doi.org/10.1016/0024-3795(75)90075-0} {\bibfield  {journal} {\bibinfo  {journal} {Linear Algebra and its Applications}\ }\textbf {\bibinfo {volume} {10}},\ \bibinfo {pages} {285} (\bibinfo {year} {1975})},\ \bibinfo {note} {publisher: Elsevier BV}\BibitemShut {NoStop}%
\bibitem [{\citenamefont {Jamiołkowski}(1972)}]{jamiolkowski_linear_1972}%
  \BibitemOpen
  \bibfield  {author} {\bibinfo {author} {\bibfnamefont {A.}~\bibnamefont {Jamiołkowski}},\ }\bibfield  {title} {\bibinfo {title} {Linear transformations which preserve trace and positive semidefiniteness of operators},\ }\href {https://doi.org/10.1016/0034-4877(72)90011-0} {\bibfield  {journal} {\bibinfo  {journal} {Reports on Mathematical Physics}\ }\textbf {\bibinfo {volume} {3}},\ \bibinfo {pages} {275} (\bibinfo {year} {1972})}\BibitemShut {NoStop}%
\bibitem [{\citenamefont {Caban}\ and\ \citenamefont {Hiesmayr}(2023)}]{caban_bound_2023}%
  \BibitemOpen
  \bibfield  {author} {\bibinfo {author} {\bibfnamefont {P.}~\bibnamefont {Caban}}\ and\ \bibinfo {author} {\bibfnamefont {B.~C.}\ \bibnamefont {Hiesmayr}},\ }\bibfield  {title} {\bibinfo {title} {Bound entanglement is not {Lorentz} invariant},\ }\href {https://doi.org/10.1038/s41598-023-38217-3} {\bibfield  {journal} {\bibinfo  {journal} {Scientific Reports}\ }\textbf {\bibinfo {volume} {13}},\ \bibinfo {pages} {11189} (\bibinfo {year} {2023})},\ \bibinfo {note} {publisher: Nature Publishing Group}\BibitemShut {NoStop}%
\bibitem [{\citenamefont {Moerland}\ \emph {et~al.}(2024)\citenamefont {Moerland}, \citenamefont {Wyderka}, \citenamefont {Kampermann},\ and\ \citenamefont {Bruß}}]{moerland_bound_2024}%
  \BibitemOpen
  \bibfield  {author} {\bibinfo {author} {\bibfnamefont {J.}~\bibnamefont {Moerland}}, \bibinfo {author} {\bibfnamefont {N.}~\bibnamefont {Wyderka}}, \bibinfo {author} {\bibfnamefont {H.}~\bibnamefont {Kampermann}},\ and\ \bibinfo {author} {\bibfnamefont {D.}~\bibnamefont {Bruß}},\ }\bibfield  {title} {\bibinfo {title} {Bound entangled {Bell} diagonal states of unequal local dimensions, and their witnesses},\ }\href {https://doi.org/10.1103/PhysRevA.109.012217} {\bibfield  {journal} {\bibinfo  {journal} {Physical Review A}\ }\textbf {\bibinfo {volume} {109}},\ \bibinfo {pages} {012217} (\bibinfo {year} {2024})},\ \bibinfo {note} {publisher: American Physical Society}\BibitemShut {NoStop}%
\bibitem [{\citenamefont {Pittenger}\ and\ \citenamefont {Rubin}(2002)}]{pittenger_convexity_2002}%
  \BibitemOpen
  \bibfield  {author} {\bibinfo {author} {\bibfnamefont {A.~O.}\ \bibnamefont {Pittenger}}\ and\ \bibinfo {author} {\bibfnamefont {M.~H.}\ \bibnamefont {Rubin}},\ }\bibfield  {title} {\bibinfo {title} {Convexity and the separability problem of quantum mechanical density matrices},\ }\href {https://doi.org/10.1016/S0024-3795(01)00524-9} {\bibfield  {journal} {\bibinfo  {journal} {Linear Algebra and its Applications}\ }\textbf {\bibinfo {volume} {346}},\ \bibinfo {pages} {47} (\bibinfo {year} {2002})}\BibitemShut {NoStop}%
\bibitem [{\citenamefont {Chruściński}\ and\ \citenamefont {Pittenger}(2008)}]{chruscinski_generalized_2008}%
  \BibitemOpen
  \bibfield  {author} {\bibinfo {author} {\bibfnamefont {D.}~\bibnamefont {Chruściński}}\ and\ \bibinfo {author} {\bibfnamefont {A.~O.}\ \bibnamefont {Pittenger}},\ }\bibfield  {title} {\bibinfo {title} {Generalized circulant densities and a sufficient condition for separability},\ }\href {https://doi.org/10.1088/1751-8113/41/38/385301} {\bibfield  {journal} {\bibinfo  {journal} {Journal of Physics A: Mathematical and Theoretical}\ }\textbf {\bibinfo {volume} {41}},\ \bibinfo {pages} {385301} (\bibinfo {year} {2008})}\BibitemShut {NoStop}%
\bibitem [{\citenamefont {Sarbicki}(2008)}]{Sarbicki_Spectralpropertiesentanglement_2008}%
  \BibitemOpen
  \bibfield  {author} {\bibinfo {author} {\bibfnamefont {G.}~\bibnamefont {Sarbicki}},\ }\bibfield  {title} {\bibinfo {title} {Spectral properties of entanglement witnesses},\ }\href {https://doi.org/10.1088/1751-8113/41/37/375303} {\bibfield  {journal} {\bibinfo  {journal} {Journal of Physics A: Mathematical and Theoretical}\ }\textbf {\bibinfo {volume} {41}},\ \bibinfo {pages} {375303} (\bibinfo {year} {2008})}\BibitemShut {NoStop}%
\end{thebibliography}%

\end{document}